\theoremstyle{plain}
\newtheorem{lemma}{Lemma}
\newtheorem{proposition}{Proposition}
\newtheorem{theorem}{Theorem}
\newtheorem{corollary}{Corollary}
\theoremstyle{definition}
\newtheorem{definition}{Definition}
\theoremstyle{remark}
\newtheorem{remark}{Remark}
\begin{document}

\title{Distributed Information Bottleneck Theory for Multi-Modal Task-Aware Semantic Communication}

\author{Yujie Zhou, Cheng Peng, Rulong Wang, Yong Xiao, Yingyu Li, Guangming Shi, Ping Zhang
\thanks{A preliminary version of this paper was presented at IEEE International Conference on Communications (ICC), Montreal, Canada, June 2025 \cite{pengchengICC}.}
\thanks{Yujie Zhou, Cheng Peng, Rulong Wang, and Yong Xiao are with the School of Electronic Information and Communications, Huazhong University of Science and Technology, Wuhan 430074, China. Yong Xiao is also with Peng Cheng Laboratory, Shenzhen, China, and Pazhou Laboratory (Huangpu), Guangzhou, China (e-mail: \{zhouyujie2357, m202372990, rulongwang, yongxiao\}@hust.edu.cn).}
\thanks{Yingyu Li is with the School of Mechanical Engineering and Electronic Information, China University of Geosciences, Wuhan 430074, China (e-mail: liyingyu29@cug.edu.cn).}
\thanks{Guangming Shi is with the Peng Cheng Laboratory, Shenzhen 518055, China (e-mail: gmshi@xidian.edu.cn).}
\thanks{Ping Zhang  is with the State Key Laboratory of Networking and Switching Technology, Beijing University of Posts and Telecommunications, Beijing, China (email: pzhang@bupt.edu.cn).}
}




\maketitle

\begin{abstract}
Semantic communication shifts the focus from bit-level accuracy to task-relevant semantic delivery, enabling efficient and intelligent communication for next-generation networks. However, existing multi-modal solutions often process all available data modalities indiscriminately, ignoring that their contributions to downstream tasks are often unequal. This not only leads to severe resource inefficiency but also degrades task inference performance due to irrelevant or redundant information. To tackle this issue, we propose a novel task-aware distributed information bottleneck (TADIB) framework, which quantifies the contribution of any set of modalities to given tasks. Based on this theoretical framework, we design a practical coding scheme that intelligently selects and compresses only the most task-relevant modalities at the transmitter.
To find the optimal selection and the codecs in the network, we adopt the probabilistic relaxation of discrete selection, enabling distributed encoders to make coordinated decisions with score function estimation and common randomness. 
Extensive experiments on public datasets demonstrate that our solution matches or surpasses the inference quality of full-modal baselines while significantly reducing communication and computational costs. 
\end{abstract}

\section{Introduction}\label{sec_intro}

Recent development of mobile communication systems, especially 6G, is moving beyond the traditional goal of simply improving the data transportation performance, such as data rates or delivery latency, but focuses on ubiquitous intelligence, a shift toward supporting highly diversified and personalized services for a vast array of users and devices\cite{xiao2025AgentNet}. 
Unlike traditional communication systems that focus on symbol-level accuracy, 6G aims to deliver tailored services that adapt to the specific functional and task-aware needs of the receiver. Within this landscape, semantic communication has emerged as a promising solution. By prioritizing the transmission of task-relevant information and ``semantic meaning" delivery, it has the potential to significantly improve the communication efficiency and enable Quality-of-Experience (QoE)-enhanced personalized service to meet the unique objectives of individual users\cite{shi2021semantic,xiao2022imitation}. 

Despite its promise, most existing research in semantic communication focuses on developing specific models or source encoding schemes for signals with a single modality, such as video\cite{9953110}, image\cite{9438648}, or voice signals\cite{10038754}. This narrow approach hampers the widespread application of semantic communication systems, as many emerging use cases, such as autonomous driving, immersive communication, and industrial digital twins, require the processing of multiple data modalities at the same time. While some recent works have explored the multi-modal semantic communication\cite{9830752, 9847027, 10431795, 10738311, razlighi2024cooperative}, they often focus on the joint processing of datasets with all the available modalities without distinction, ignoring the fact that the contributions of different data modalities to each specific task are generally different. For instance, a navigation task in autonomous driving may heavily rely on visual and radar data while treating the audio signal as relatively less important; conversely, a voice-assistant task may require mainly the acoustic signal. Moreover, processing and compressing all available modalities indiscriminately may undermine the core resource-efficiency benefits of semantic communication, leading to unnecessary computational overhead and excessive energy consumption at the edge\cite{10729742}. Furthermore, recent results even suggest that including irrelevant or redundant modalities of data can, in some cases, cause decision-making ambiguity, which may result in degraded performance, as the system must filter through conflicting or distracting information when executing specific tasks\cite{lu2024theory, peng2022balanced}. 

\begin{figure}[t]
  \centering
  \includegraphics[width=1\linewidth]{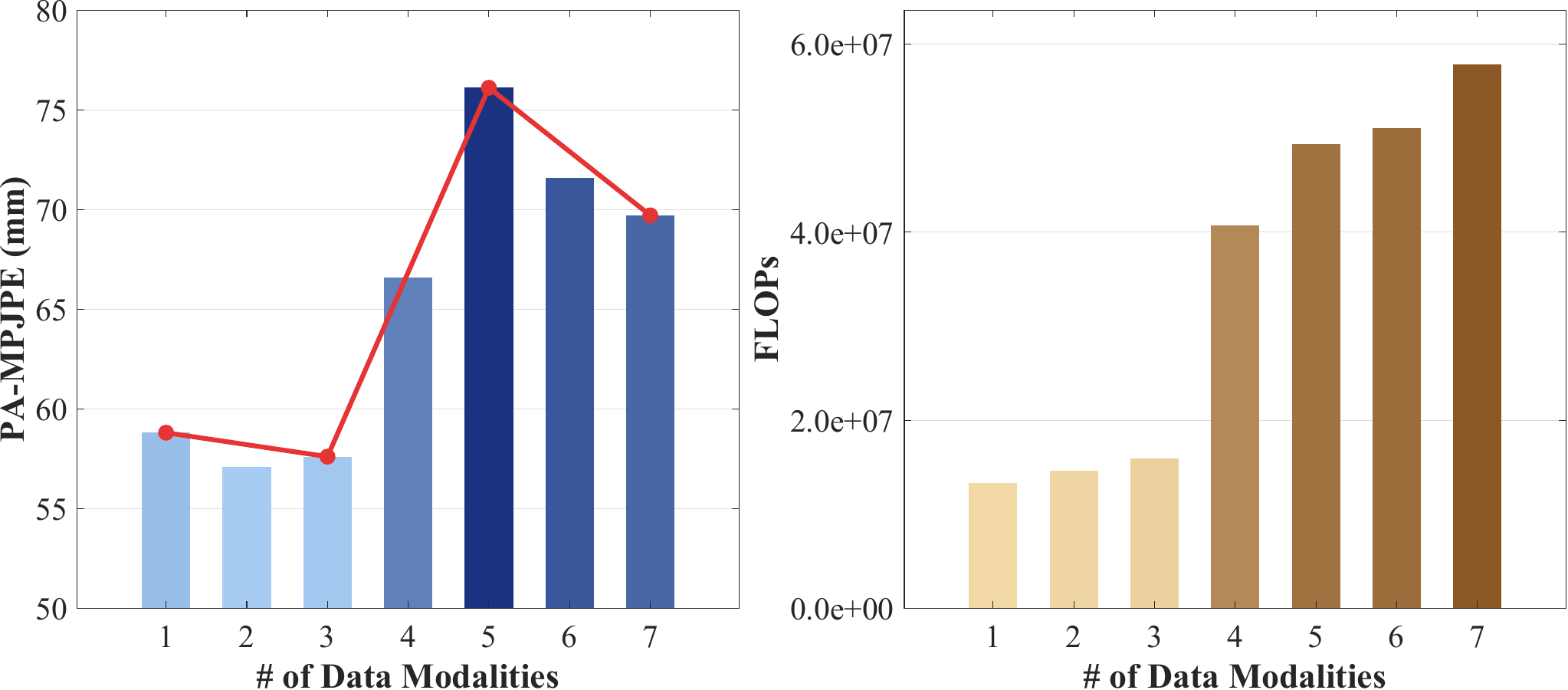}
  \caption{Performance improvement and computational overhead w.r.t. modalities on MM-Fi. The more modalities involved (from left to right), the higher the computational cost. However, the corresponding inference performance does not increase monotonically when redundant/difficult-to-learn/irrelevant modalities (e.g., RGB image and LiDAR) are added.}
\label{fig_motiv}
\vspace{-0.2in}
\end{figure}

To shed more light on the impact of processing various combinations of data modalities on the performance of tasks, we consider a multi-modal semantic communication system in which the semantic encoder can observe various combinations of seven distinct data modalities, including RGB images and depth maps, as well as non-visual signals such as infrared, LiDAR, mmWave, and WiFi CSI, collected when a human user conducts different positions, and the main task at the receiver is to accurately estimate the position based on the signals sent by the encoder. We evaluate the task performance and computational efficiency based on procrustes-aligned mean per-joint position error (PA-MPJPE) and floating-point operations (FLOPs) during inference, respectively. 
Our experimental results reveal a nuanced relationship between data diversity and task accuracy. Initially, integrating additional data modalities in position estimation leads to accuracy improvement. However, once the total number of considered modalities exceeds a specific threshold, i.e., over four data modalities, the overall accuracy begins to decline. Furthermore, this increase in data modalities triggers a substantial increase in computational demand, i.e., {over 2.5 times increase in FLOPs when the number of modalities exceeds 4}, suggesting that indiscriminate processing of all available modalities not only undermines resource efficiency but can also introduce decision ambiguity, leading to sub-optimal task performance.
Currently, there is still a lack of a unified framework that can evaluate how different modalities contribute to various task requirements. Developing such a framework is essential to bridge this gap, enabling the encoder to intelligently identify and process only the most critical modalities for a receiver's intended tasks, thereby truly fulfilling the 6G promise of efficient, task-oriented intelligence for everything.

Motivated by this observation, in this paper, we investigate the multi-modal task-aware semantic communication from an information-theory perspective. In particular, we propose a novel theoretical framework based on distributed information bottleneck (DIB), called task-aware DIB (TADIB), to quantify the specific contributions of any given set of data modalities for a specific task requirement. Our proposed framework allows the semantic encoder to determine the minimal sufficient information required from the optimal combination of modalities of data to satisfy a specific task's requirements. In this way, the encoder can intelligently filter and process only the most critical data subsets, ensuring optimal resource allocation and task performance at the receiver. 

We briefly summarize the main contribution of this paper as follows:
\begin{itemize}
  \item[(1)]  Theoretical framework for modality selection: We propose the task-aware distributed information bottleneck (TADIB) framework, which generalizes classical DIB theory by introducing an optimized modality selection mechanism as a new degree of freedom in the rate-relevance tradeoff. This provides a theoretical foundation for identifying the minimal sufficient set of modalities required for specific tasks, enabling efficient multi-modal semantic communication under resource constraints.
  \item[(2)] Tractable optimization: We develop a probabilistic relaxation of TADIB ($p$TADIB), which transforms the discrete modality-task selection into a probabilistic form. This allows joint end-to-end optimization of selection policies and semantic codecs via gradient-based methods, using score function estimation and common randomness to coordinate decisions across distributed devices. We also provide a theoretical analysis proving the equivalence between the primal TADIB and its probabilistic form.
  \item[(3)] Empirical validation: Extensive experiments on public multi-modal multi-task datasets demonstrate the effectiveness of TADIB. Compared to state-of-the-art baselines (e.g., VDDIB and DLSC), TADIB achieves comparable or better inference quality while reducing the number of active modality-task links by up to 70.4\% and the communication rate by up to 43.6\%. Moreover, TADIB significantly reduces per-epoch training time by up to 90.6\% compared to full-participation DIB-based methods, verifying its practical efficiency.
\end{itemize}

\vspace{-0.1in}
\section{Related Work}\vspace{-0.1in}
Recent research in semantic communication has shifted focus from data reconstruction to the recovery of task-relevant information at the receiver. While foundational works \cite{shao2021learning, xie2023robust, li2024tackling} primarily address single-modal, single-task scenarios, the field is rapidly evolving to handle more complex, realistic settings involving multiple modalities and tasks.
Initial efforts in multi-modal semantic communication \cite{9830752, 9847027, 10431795} often operate under a non-cooperative paradigm, where each task independently processes its own multi-modal dataset. More recent studies \cite{10738311, razlighi2024cooperative} advance this by investigating joint or cooperative multi-tasking across distributed devices. However, a common and limiting assumption persists across much of this literature: that all available data modalities from all transmitters are essential for every receiver's task. This approach leads to the indiscriminate processing and transmission of full modality sets, which contradicts the core efficiency promise of semantic communication.
This oversight creates critical gaps. Processing and transmitting all modalities irrespective of their task relevance leads to unnecessary computational overhead, excessive energy consumption, and sub-optimal use of limited channel capacity \cite{10729742}.
And as discussed in our introduction and supported by experimental evidence (e.g., Fig. \ref{fig_motiv}), including redundant or irrelevant modalities can introduce decision-making ambiguity and actually degrade task performance.
Some works have begun to address resource constraints. For instance, \cite{10729742} considers modality selection, but is limited to a single-task context and does not account for inter-task correlations or the collective link occupancy in multi-task scenarios. Furthermore, many DL-based solutions \cite{9830752, 9847027, 10431795} prioritize inference quality without theoretical guarantees on the optimal rate-relevance trade-off. While other approaches like \cite{10327757} tackle rate adaptation from a physical-layer channel coding perspective, their focus is orthogonal to the fundamental problem of source-level, task-aware redundancy elimination across multiple modalities.
Our work bridges this gap by proposing TADIB designed explicitly for optimal modality selection and coding in multi-modal multi-task semantic communication. It provides a principled method to identify and transmit only the most task-relevant information from the most critical modality subsets, guaranteeing both efficiency and performance.

\section{Backgrounds and Preliminaries}\label{sec_relatedwork}

\subsection{Distributed Information Bottleneck}
\subsubsection{Development Sequence}
The Tishby's information bottleneck (IB) is first proposed in \cite{tishby2000information}. 
It is grounded in RD theory \cite{goldfeld2020information}. 
Unlike the data reconstruction emphasized by the RD-based lossy compression, IB focuses on recovering the hidden meaning or semantics in the data under a rate limit.
In which the meaning is introduced as a relevant hidden variable according to the raw data.
It extracts the relevant information of targets (meaning) from raw signals to yield representations that are minimally informative about raw signals and maximally informative about targets, in which the informativeness is measured by MI.
The current work \cite{zaidi2020information} indicates that IB is essentially a remote source coding problem in which the distortion is measured under logarithmic. 

Work \cite{aguerri2019distributed} extends IB theory to distributed scenarios as DIB, which involve multiple transmitters and a receiver.
This extension is rigorous, which follows the classical result \cite{1055037, 6651793} to characterize the optimal rate-relevance tradeoff region.
IB-based coding can be implemented efficiently.
The work \cite{deepvib} establishes the deep variational IB (VIB), which utilizes variational techniques and deep learning-based parameterization schemes to achieve IB. 
By applying the powerful gradient backpropagation in the learning paradigm, the feasibility of IB has been greatly enhanced.
This also illustrates that leveraging DIB to establish multi-modal multi-task semantic communication systems is promising: it is theoretically grounded in information theory and can be efficiently implemented by DL.

\subsubsection{DIB for Semantic Communication}
DIB offers an information-theoretic framework that naturally aligns with semantic communication goals. 
Note that semantic communication aims to transmit task-relevant meaning rather than raw data. DIB realizes this goal by seeking a compressed representation that maximally preserves information about a task target signal while being minimally informative about the raw observation. 
This rate-relevance tradeoff, balancing representation compactness against predictive power, captures the core efficiency goal of semantic communication: minimizing communication cost without sacrificing task performance.
Recent works \cite{shao2021learning, shao2022task, 9767641,gunduz2022beyond} have noticed this connection and developed a branch of semantic communication based on DIB.

\subsubsection{A Quick Review of DIB}
In the remainder of this part, we provide a review of the DIB theory.
The classical IB aims to find the optimal way to extract a representation $Z$ from a random variable $X$ that preserves the relevant information about a task-relevant signal $Y$. 
This is framed as minimizing the rate, namely, the MI between $X$ and $Z$, $I(X; Z)$, while maximizing the MI between $Z$ and $Y$, $I(Z; Y)$, equivalently, minimizing the conditional entropy $H(Y|Z)$ by $I(Z; Y) = H(Y) - H(Y|Z)$ with a stationary source entropy $H(Y)$. 

The DIB extends the above theoretical framework to scenarios involving multiple encoders, each has a set of locally observed data samples. 
A standard DIB consists of $K$ encoders and a single decoder for recovering the target signal $Y$, where each encoder converts its local observation $X_k$ to a low-dimensional and low-rate feature $Z_k$. 
With a Lagrangian multiplier $\beta \geq 0$, we expect encoders to satisfy the following
\begin{equation}\label{obj_DIB}
  \begin{aligned}
    & \min\nolimits_{p(z_k|x_k)} \Big\{\mathcal{L}_{\text{DIB}}[p(z_k|x_k)] 
    \\ & \quad := H(Y|Z_\mathcal{K}) + \beta \sum_{k\in\mathcal{K}} (H(Y|Z_k) + I(X_k;Z_k)) \Big\},
  \end{aligned}
\end{equation}
where $Z_\mathcal{K}= \{Z_1,...,Z_K\}$ and $\beta$ corresponds to the weighting factor. $H(\cdot|\cdot)$ is the conditional entropy (CE).
This expression implicitly controls the sum-rate $I(X_\mathcal{K}; Z_\mathcal{K})$ to be less than a certain value, as this objective, in the form of a Lagrangian, has transformed the rate constraint into the control of the multiplier $\beta$.
This corresponds to the result of \cite{aguerri2019distributed,6651793}.

\subsection{Notations}
We first clarify the main notations used in this paper.
Calligraphic letters (e.g., $\mathcal{X}$) denote sets or functionals, uppercase letters (e.g., $X$) represent random variables or constants, and lowercase letters (e.g., $x$) represent elements of sets. Boldface letters (e.g., $\bm x$ or $\bm X$) emphasize vector structures. We denote the cardinality of a set $\mathcal{X}$ by $|\mathcal{X}|$. The distribution induced by a random variable $X$ is denoted by $P_X$, and its probability density function (pdf) or probability mass function (pmf) by $p_X$. 
The Kullback-Leibler (KL) divergence between two distributions $P$ and $Q$ is given by $D_{\text{\rm KL}}(P\|Q)\triangleq  \mathbb{E}_{P}[\log \frac{dP}{dQ}]$ where $P\ll Q$. The mutual information (MI) between random variables $X$ and $Z$ is defined as $I(X;Z) \triangleq D_{\text{\rm KL}}(P_{XZ}\|P_{X}P_{Z})$. 
The notation $[N]$ abbreviates the index set $\{1,...,N\}$.
Sometimes, we explicitly write a vector in component form, e.g., $\bm x = (x_1,...,x_N) = (x_n)_{n\in [N]}= (x_n)_{n=1}^N = (x_n)_{n}$.
These notations are all equivalent. Throughout, the last form is usually used to simplify the expression if the index representation element, e.g., $n$, is clear from the context.

\section{System Model and Problem Formulation}\label{sec_sys}
\begin{figure*}[t]
  \centering
  \includegraphics[width=0.75\linewidth]{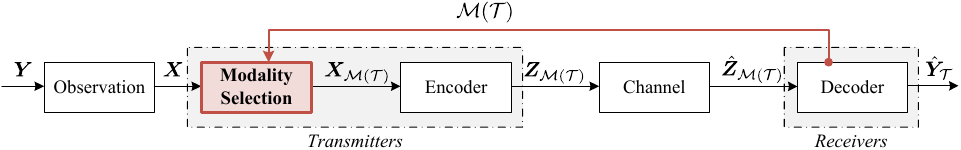}
  \caption{Overview of a general multi-modal task-aware semantic communication system.}\label{fig_sys}
  \vspace{-0.2in}
\end{figure*}
We consider a multi-modal task-aware semantic communication system, as illustrated in Fig. \ref{fig_sys}, in which a set of different modalities of data ${\cal M}$, generated by an implicit semantic information source, can be observed by the encoder and the main objective of the system is to perform a set of $T$ downstream tasks, denoted as ${\cal T}$, based on the semantics of the source recovered by the decoder. 
Let ${X}_m$ be the $m$th modality of data that can be observed by the encoder. We use ${\bm X}_{{\cal M}(t)}$ to denote the raw data associated with a subset of modalities ${\cal M}(t)$ for ${\cal M}(t) \subseteq {\cal M}$. Let $Y_t$ be the semantics (target signal) that is required at the receiver end to perform task $t$ for $t \in {\cal T}$. We also use ${\bm Y}_{\cal T}$ to denote the desired set for the receiver to perform a set ${\cal T}$ of downstream tasks. 

As observed from the motivation example in our Introduction, different data modalities do not contribute equally to the performance of each task, and a carefully selected subset of modalities is often sufficient to achieve the desired quality when serving the downstream tasks at the receiver. 
Therefore, in this paper, we focus on the task-aware semantic extraction and recovery problem in a resource-constrained semantic communication system. In this case, for a set of tasks $\cal T$, the semantic encoder will select a subset of modalities of its observed data to be compressed into a set of semantic representations sent to the channel. The compressed representations need to be not only efficient for physical channel transmission but also sufficient for the decoder to recover the required target signals to perform the downstream tasks.

Suppose there exists a modality selection operation $\upsilon$ that can output the subset of modalities of data that is sufficient for extracting the required semantics for tasks $\mathcal{T}$, i.e., ${\cal M}(\mathcal{T})= \bigsqcup_t {\cal M}({t})$ where $\bigsqcup$ represents the disjoint union, that is, if multiple tasks (e.g., $T$ tasks) select the same modality, this modality w.r.t. different tasks will appear $T$ times in ${\cal M}(\mathcal{T})$ (rather than only once).
In this case, we often call ${\cal M}(\mathcal{T})$ the set of selected modality-task links.
Each element is a link (pair) of modalities and tasks.
We then define the stochastic encoder and decoder for the task-aware semantic communication system as follows:

\begin{definition}
For an arbitrary set of data ${\bm X}_{{\cal M}({\cal T})}$ generated from a set of semantic sources with semantics ${\bm Y}_{\cal T}$, we define a encoder $f$ as a mapping that outputs a representation ${\bm Z}_{{\cal M}({\cal T})}$ from input ${\bm X}_{{\cal M}({\cal T})}$, i.e., $f: {\bm X}_{{\cal M}({\cal T})} \rightarrow {\bm Z}_{{\cal M}({\cal T})}$. We also define a decoder $g$ as a mapping that outputs a set of recovered semantics $\hat{\bm Y}_{\cal T}$ based on the representation $\hat{\bm Z}_{{\cal M}({\cal T})}$ received from the channel, i.e., $g: \hat{\bm Z}_{{\cal M}({\cal T})} \rightarrow \hat{\bm Y}_{\cal T}$. 
\end{definition}

The above problem can be directly formulated using the DIB theory, an information-theoretical framework focusing on extracting the minimally sufficient statistical representation from distributed data sources, i.e., modalities, that maximize the information between the representation and the targeted downstream tasks. More formally, we can formulate the task-aware semantic encoding and decoding problem as follows:
\begin{equation}\label{obj_MDIB}\tag{\textbf{P}$_\text{1}$}
  \begin{aligned}
    & \min_{{\upsilon, f, g}:\mathcal{M}(\mathcal{T}) \in \mathscr{A} } \mathcal{L}_{\text{TA}}[\upsilon, f, g]:= \sum_{t\in \mathcal{T}} \Big\{ H(Y_t|\bm Z_{\mathcal{M}(t)})
    \\ & \quad \qquad + \beta \sum_{ m \in \mathcal{M}(t)} \big(H(Y_t|Z_{ m , t}) + I(X_{ m };Z_{ m , t})\big)\Big\},
  \end{aligned}
\end{equation} 
where $\upsilon$ denotes the selection operation, $f$ denotes the encoder, and $g$ denotes the decoder.
$g$ can be fully determined if the exact forms of encoder $f$ and data distributions are given \cite[Appendix A]{shao2022task}. As thus, we often omit $g$ for convenience, e.g., $\mathcal{L}_{\text{TA}}[\upsilon, f] = \mathcal{L}_{\text{TA}}[\upsilon, f, g(f, \upsilon)] = \mathcal{L}_{\text{TA}}[\upsilon, f, g]$.
Besides, $\mathscr{A}$ is the feasible region of the selected set $\mathcal{M}(\mathcal{T})$ given by physical resource constraints.
The conditional entropy (CE) $H(Y_t|\bm Z_{\mathcal{M}(t)})$ measures the multi-modal quality for task $t$, and the CE $H(Y_t|Z_{ m , t})$ measures the single-modal one.
The MI $I(X_{ m }; Z_{ m , t})$ measures the single-modal rate.

This expression unifies selection with coding as optimizable variables, extending (\ref{obj_DIB}) to multi-modal multi-task cases. 
Since it is based on DIB, the modality-fused $\bm Z_{\mathcal{M}(t)}$ can be depicted with a low information rate while maintaining high semantic recovery quality.

It can be observed that solving this problem (\ref{obj_MDIB}) is quite challenging due to the following reasons. \emph{First}, finding the optimal set of links $\mathcal{M}(\mathcal{T})$ w.r.t. $\upsilon$ for solving tasks is difficult since there is still lacking a general metric for measuring the relevance between raw data and tasks. 
\emph{Second}, multiple data modalities and tasks are usually scattered in a decentralized resource-constrained network.
This increases the difficulty of evaluating the metric, as coordination among distributed devices is required, where there does not exist an ``oracle'' that can access all devices' raw data and tasks.
\emph{Third}, calculating the terms at the RHS of the expression and ideal decoder $g$ in terms of $f$ is practically infeasible, since the exact joint distributions of sources are unknown in realistic cases, and only accessible via finite empirical samples.

In the following sections, we will first establish a metric to optimally evaluate the contribution made by the modality corresponding to the task. 
Then, we specify how to choose the optimal set in the general case of a distributed network with resource constraints based on the metric.
Subsequently, we perform a computationally tractable transformation of this original problem based on variational approximation and parameterization to derive an algorithmic solution.

\section{Task-aware Modality Contribution Metric}\label{sec_selection}

To select the appropriate modalities of data to process according to the requirements of downstream tasks at the receiver, we need first to develop a metric to measure the relevance/contribution of different modalities of data for $\mathcal{T}$. 

\subsection{General Definition of Task-Modality Score}
In this paper, we propose a novel metric, called the task-modality score, to quantify the volume of information that can be provided by a subset of modalities of data for downstream tasks. 

\begin{definition}\label{def_score}
  Given a set of modality-task links $\mathcal{M}(\mathcal{T})$ corresponding to $\upsilon$, the relevance metric $\mathcal{R}: \mathcal{A} \to \mathbb{R}$ is a set-to-scalar function, defined as 
  \begin{equation}
    \mathcal{R} \left(\mathcal{M}(\mathcal{T})\right) = \min_{f,g} \mathcal{L}_{\text{TA}}[\upsilon, f, g].
  \end{equation}
  The lower the value of $\mathcal{R}$, the higher the data-task relevance captured by the set of modality-task links $\mathcal{M}(\mathcal{T}) \in \mathcal{A}$ for downstream tasks $\mathcal{T}$.
\end{definition}
The definition of $\mathcal{R}$ is straightforward and optimal to (\ref{obj_MDIB}): if we select the set $\mathcal{M}(\mathcal{T})^\star$ that has the minimal value of $\mathcal{R}(\mathcal{M}(\mathcal{T}))$ over all sets satisfying the constraint, then we achieve the optimum of (\ref{obj_MDIB}). 
This is due to the commutativity of minimum, i.e.,
\begin{equation}\label{eq_commu}
  \min_{\mathcal{M}(\mathcal{T}), f, g} \mathcal{L}_{\text{TA}} = \min_{\mathcal{M}(\mathcal{T})} \min_{f, g} \mathcal{L}_{\text{TA}} = \min_{\mathcal{M}(\mathcal{T})} \mathcal{R}(\mathcal{M}(\mathcal{T})),
\end{equation}
where $\upsilon$ is replaced by its induced $\mathcal{M}(\mathcal{T})$ equivalently.

Note that our proposed $\mathcal{R}(\cdot)$ is fundamentally different from the existing solution that measures the contribution of each data modality based on the marginal contribution.
Previous works \cite{pengchengICC, shao2022task, 10729742} typically adopt a two-phase approach: they first assign each individual modality-task link a relevance value based on pre-defined metrics, then select the top-k links with the highest values to form the final selected set.
This type of scheme, however, suffers from several fundamental limitations that lead to sub-optimality.
First, they evaluate the contribution of each modality for a task independently, ignoring the complex interactions among different modalities and tasks.
This means that to some extent, they estimate the average benefit of choosing the link rather than the maximum benefit of the selected set that include this one.
Second, they often rely on heuristic/pre-defined metrics that lack theoretical guarantees for optimality.
Note that the optimal subset should be induced by the original objective (\ref{obj_MDIB}) itself, rather than by artificially pre-defined metrics.
Our proposed metric $\mathcal{R}(\cdot)$ overcomes these limitations by directly evaluating the overall contribution of the selected set to the entire task set, ensuring optimality as shown in (\ref{eq_commu}).

\subsection{Applying the Score for Multi-modal Multi-task Dataset}

While Definition \ref{def_score} establishes the theoretical optimality of $\mathcal{R}(\cdot)$, its direct calculation on realistic multi-modal multi-task datasets faces the challenge that the true joint distributions of data and tasks are unknown. 
To make this metric a \emph{general tool for measuring task-modality relationships} in a wide of practical applications, we take an approximation that bridges information theory with data-driven learning system.
Applying the score to a dataset involves two key approximations, which we formally detail in the following Section \ref{sec_coding} and \ref{sec_opt}.
Here, let us directly preview the results and show how to calculate the task-modality score on datasets.

Let a dataset of task $t$ be $\mathcal{D}_t:= \{(\bm{X}_{t}^i, Y_t^i)\}_{i=1}^{N_t}$ with $|\mathcal{D}_t| = N_t$ samples.
Then, 

\begin{proposition}
  For a set of modality-task links $\mathcal{M}(\mathcal{T})$ w.r.t. $\upsilon$ on datasets $\{\mathcal{D}_t\}_{t\in \mathcal{T}}$, its corresponding task-modality score $\mathcal{R} \left(\mathcal{M}(\mathcal{T})\right)$ is empirically estimate by
  \begin{equation}
    \!\!\!
    \begin{aligned}
      \mathcal{R} \left(\mathcal{M}(\mathcal{T})\right) & \approx \min_{\phi, \varphi, \psi} \sum_{t\in \mathcal{T}} \Big\{ -\frac{1}{N_t}\sum_{i=1}^{N_t}\log q_{\phi_t}(y_t^i|\bm z_{\mathcal{M}(t)}^i)
      \\ & + \beta \sum_{ m \in \mathcal{M}(t)} \big(-\frac{1}{N_t}\sum_{i=1}^{N_t}\log q_{\varphi_t}(y_t^i|\bm z_{ m , t}^i) 
      \\ & + \frac{1}{N_t}\sum_{i=1}^{N_t}\log \frac{p_{\psi_{ m ,t}}(z_{ m ,t}^{i}|x_{ m ,t}^{i})}{\frac{1}{N_t}\sum_{j=1}^{N_t} p_{\psi_{ m ,t}}(z_{ m ,t}^{i}|x_{ m ,t}^{j})}\big)\Big\},
    \end{aligned}
    \!\!\!
  \end{equation}
  where parameters $\phi:= (\phi_t)_t$, $\varphi:= (\varphi_t)_t$, $\psi:= (\psi_t)_t$ are corresponding to optimizable codec $f$ and $g$ defined in (\ref{obj_MDIB}).
  The approximation is exact at the limit $N_t \to \infty$ for all $t \in \mathcal{T}$.
\end{proposition}
\begin{proof}
  It follows from the key results in Section \ref{sec_coding} and \ref{sec_opt}, esp., Proposition \ref{prop_1} and expressions (\ref{eq_upbo_club_empirical})-(\ref{eq_lobo_v2_empirical}).
\end{proof}

\noindent\textbf{Generality and benefits:}
The approximation in Proposition 1 provides a general-purpose tool for evaluating the contribution of any modality subset to a set of tasks, using only finite samples from the dataset. 
This is applicable to a wide range of complex multi-modal and multi-task scenarios.
For instance, in autonomous driving, one can assess whether LiDAR, camera, or radar data are most relevant for obstacle detection or trajectory prediction.
Thus, by only processing the relevant data modalities, we can better meet the requirements of low latency and high reliability.
This flexibility aligns with the original motivation: to transmit only the most relevant information for tasks, thereby saving communication and computation resources while maintaining inference quality.

\noindent\textbf{Dependence on samples and parameters:}
The empirical estimate of \(\mathcal{R}(\mathcal{M}(\mathcal{T}))\) in Proposition 1 depends critically on two factors: the sample size \(N_t\) and the parameter optimization over \(\phi, \varphi, \psi\). First, the estimate is asymptotically unbiased: as \(N_t \to \infty\) for each task \(t\), the law of large numbers ensures that the empirical averages converge to the true expectations. 
With finite data, the estimate exhibits variance that can affect the reliability of modality comparisons.
Second, the inner minimization over codec parameters is non-convex and subject to the limitations of gradient-based optimization. If the parameterized function families (e.g., neural networks) do not contain the optimal codecs, or if the optimization fails to reach the global minimum due to local optima or inadequate training, the estimated score becomes an upper bound of the true minimal \(\mathcal{R}\). This means that in practice, we are evaluating an upper bound of the task-modality relevance, which, however, may still be effective for comparative selection if the bias is consistent across different modality sets. These dependencies underscore the importance of sufficient data and effective optimization in making the score a reliable guide for modality selection.
In the experimental phase, we pointed out that the estimation relying on data and parameters remains valid.

\noindent\textbf{Limitations and the need for scalable optimization:}
Despite its generality, directly applying Proposition 1 to find the optimal modality-task set $\mathcal{M}(\mathcal{T})^\star = \arg\min_{\mathcal{M}(\mathcal{T})} \mathcal{R}(\mathcal{M}(\mathcal{T}))$ is computationally prohibitive. The number of possible modality-task sets grows exponentially with the number of modalities and tasks: for $M$ modalities and $T$ tasks, there are $2^{MT}$ possible selections. Evaluating $\mathcal{R}$ for each candidate set requires solving an inner optimization over codec parameters ($\phi, \varphi, \psi$), which is expensive even for a single set. This exhaustive search is clearly infeasible for practical systems where $M$ and $T$ can be large. Therefore, we need a scalable and coordinated method that can efficiently identify the minimal-score selection without enumerating all possibilities. 
Moreover, in distributed networks, modalities and tasks are often scattered across different devices, further complicating centralized computation.

\noindent\textbf{How to address limitations in general distributed cases:}
To address the above, the following sections develop a distributed optimization framework that jointly learns the selection policy and the codecs in a single end-to-end process. Instead of evaluating $\mathcal{R}$ for every possible set, we introduce a probabilistic relaxation that transforms the discrete selection into a continuous optimization problem. This approach not only bypasses the combinatorial explosion but also aligns with the decentralized nature of real-world multi-modal networks. We detail this in following sections, where we show how to efficiently find the minimizer of $\mathcal{R}(\cdot)$ in a network setting.

\section{In-Network Task-aware Modality Selection}\label{sec_dist}
\begin{figure}[t]
  \centering
  \includegraphics[width=1\linewidth]{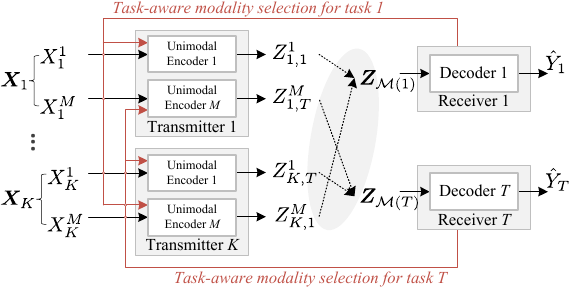}
  \caption{Task-aware modality selection in general distributed networks. Modalities (tasks) are distributed at different transmitters (receivers).}\label{fig_sys2}
  \vspace{-0.2in}
\end{figure}
We now specify how to perform the task-aware modality selection in a general distributed network based on the proposed metric $\mathcal{R}(\cdot)$.
We first refine the above multi-modal task-aware semantic communication system model in such a network, as illustrated in Fig. \ref{fig_sys2}, then present an approach to find the minimizer of $\mathcal{R}$.

\subsection{Network Setting}
\emph{Distributed transmitters and receivers:} Different modalities are observed at different transmitters.
We use ${\mathcal K} := \{1,..., K\}$ to denote the set of transmitters. 
Each transmitter $k\in\mathcal{K}$ observes its own multi-modal data $\bm X_k$, consisting of $M$ different modalities, represented as $\bm X_k := (X_k^{1}, ..., X_k^{M})$. 
All such $\bm X_k$ recover the total data $\bm X:=(\bm X_1, ..., \bm X_K)$.
Each transmitter's data is unique, though potentially correlated, and yields a joint distribution $P_{\bm X} = P_{\bm X_1...\bm X_K}$.
In this case, the aforementioned complete modality set $\mathcal{M}$ is given by $\mathcal{M} := \{(k, m): k\in \mathcal{K}, m \in [M]\}$ with $|\mathcal{M}|= KM$.
Also, the modality component $X_{m}$ is now represented as $X_{k}^{ m }$.
We also use ${\mathcal T} := \{1,..., T\}$ to denote the set of receivers, where we assume that each receiver focuses on a single task and different receivers focus on different tasks.

\emph{Distributed semantic encoding:} Each transmitter holds its respective encoder $f_k^{ m }: \mathcal{X}_k^{ m }\times \mathcal{T} \to \mathcal{Z}_{k,t}^{ m }$, to extract the low-dimensional feature of the locally observed data modality $ m $ that captures the most-relevant information for the receiver $t$ to perform its intended task.
In this case, the aforementioned representation $Z_{ m , t}$ is now specified as $Z_{k,t}^{ m }$.
For every $ m \in [M] $, we introduce such a mapping $f_k^{ m }$ at $k$. All such encoders form the encoder $f := \{f_k^{ m }\}_{m, k}$.

\emph{Modality-task link selection, distributed selectors, and link limits:}
We say a modality-task link, i.e., $(k, m, t)$, is selected if the transmitter $k$ provides the feature of the modality $ m $ for the receiver's task $t$.
Formally, we introduce a binary indicator function $a_{k,m,t}$ to denote the selection, where $a_{k,m,t} = 1$ if $(k, m, t)$ is selected, and $a_{k,m,t} = 0$ otherwise.
E.g., all the transmitters are capable of communicating with all the receivers simultaneously, then the maximum number of selected modality-task links is $K M T$.
In so doing, we explicitly have $a := \mathcal{M}(\mathcal{T}) := \{(k, m, t): a_{k,m,t} = 1\}$, and corresponding $KMT$-dimensional vectors $\bm X_{\mathcal{M}(\mathcal{T})}:=(a_{k,m,t}X_{k}^{ m })_{k, m, t}$ and $\bm Z_{\mathcal{M}(\mathcal{T})}:=(a_{k,m,t}Z_{k,t}^{ m })_{k, m, t}$.

However, due to the limited channel capacity and computational capability, only a subset of these links are available: $|\mathcal{M}(\mathcal{T})| < KMT$.
We now define the feasible region $\mathscr{A}$ of $\mathcal{M}(\mathcal{T})$ in this setting.
We consider the following generally practical case:
Each transmitter $k$ can only provide a subset $\mathcal{M}_k$ of its modality-based features to all receivers, 
\begin{equation}\label{st_eq_1}
  |\mathcal{M}_k| \leq E_k \Leftrightarrow \sum_{m,t} a_{k,m,t} \leq E_k,\vspace{-0.1in}
\end{equation}
where $E_k\leq M T$ is the number of selected links at $k$.
We introduce a selector $\upsilon_k$ for the transmitter $k$ to decide links for all given tasks $t \in \mathcal{T}_k \subseteq \mathcal{T}$ that access $k$, where the selected links do meet the above: Given $k$, $\mathcal{M}_k = \upsilon_k(\mathcal{T}_k)$ that satisfies
\begin{equation}
  \mathcal{M}_k = \{(k, m, t): a_{k,m,t} = 1, \sum_{m,t} a_{k,m,t} \leq E_k\}.\vspace{-0.1in}
\end{equation}

Besides, each receiver $t$ can only access a subset of transmitters $\mathcal{K}_t \subseteq \mathcal{K}$ to obtain features for its task, i.e., \vspace{-0.05in}
\begin{equation}\label{st_eq_2}
  |{\mathcal K}_t| \leq E_t \Leftrightarrow \sum_{k} \max_{m} a_{k,m,t} \leq E_t,\vspace{-0.05in}
\end{equation}
where $E_t\leq K$ is the number of selected transmitters at $t$.
In this case, the maximum operation over $m$ at $k$ implies that the $k$-th transmitter is selected if any modality $ m $ of $k$ is selected.
This constraint is valid that due to the data privacy concerns, the receiver typically cannot access the transmitter's raw data. As a result, the transmitter's modality information remains inaccessible, like a black box, to the receiver, accessible only through the transmitter itself.
We also introduce a selector $\upsilon_t$ for the receiver $t$ to decide $\mathcal{K}_t$ for task $t$.
Given $t$, $\mathcal{K}_t = \upsilon_t(\mathcal{K})$ that satisfies
\begin{equation}
  \mathcal{K}_t = \{k: a_{k,m,t} = 1, \sum_{k} \max_{m} a_{k,m,t} \leq E_t\}.
\end{equation}
Also, in terms of all $\{{\mathcal K}_t\}_t$, we can formally define ${\mathcal T}_k$ of $k$: ${\mathcal T}_k:= \{t: \text{this } k\in {\mathcal K}_t, \forall t \in \mathcal{T}\}$.
Now, with all the established selectors $\upsilon := (\upsilon_k, \upsilon_t)$, we can finally derive the set $\mathcal{M}(\mathcal{T})$ in terms of ${\mathcal T}_k$ and ${\mathcal M}_k$ by $\upsilon$, which corresponds to the selection operation we defined in the general system model.
To summarize, we instantiate the feasible region $\mathscr{A}$ of $\mathcal{M}(\mathcal{T})$ in (\ref{obj_MDIB}) as the above constraints (\ref{st_eq_1}) and (\ref{st_eq_2}).

\emph{Communication channel and rate limits:} The selected features $\bm Z_{\mathcal{M}(\mathcal{T})}$ are assumed to be transmitted to the receivers via communication channels.
Throughout, we consider the standard error-free channels with the channel capacity limits, this means that $\hat{\bm Z}_{\mathcal{M}(\mathcal{T})} = \bm Z_{\mathcal{M}(\mathcal{T})}$ under a sum-rate limit, i.e.,$R \geq I(\bm X_{\mathcal{M}(\mathcal{T})}; \bm Z_{\mathcal{M}(\mathcal{T})})$, which is implicitly controlled by the multiplier $\beta$-controlled trade-off in (\ref{obj_MDIB}) following (\ref{obj_DIB}) \cite[Proposition 2]{aguerri2019distributed}.

\emph{Semantic decoding:} Each receiver $t\in\mathcal{T}$ aims to execute its respective semantic inference task by a fused representation $\bm Z_{\mathcal{M}(t)}$, which is essentially a sub-vector of $\bm Z_{\mathcal{M}(\mathcal{T})}$ corresponding to a fixed $t$.
It attempts to recover its target $Y_t$ via a decoder $g_t: \mathcal{Z}_{\mathcal{M}(t)} \to \mathcal{Y}_{t}$.
Let the recovered target be $\hat{Y}_t$.
All such decoders form the decoder $g := \{g_t\}_{t}$.

As aforementioned, the goal is to recover the true $\bm Y_{\mathcal{T}} = (Y_t)_{t\in\mathcal{T}}$ as precisely as possible via $\hat{\bm Y}_{\mathcal{T}} = (\hat{Y}_t)_{t\in\mathcal{T}}$ under the above limits.
We should find the optimal selectors and codecs to achieve this goal.

\subsection{Finding the Optimal Set in Probability}\label{sec_rand_sel}
While theoretically optimal, calculating the metric $\mathcal{R}$ for each possible set is extremely intractable.
In the rest of this section, we develop a computationally feasible approximation via probability tools.
Enumeration is feasible to obtain $a^\star := \arg\min_a \mathcal{R}(a)$ for small $\mathcal{K}$ and $\mathcal{T}$, however with unaffordable computation costs if the number of devices becomes large: the cardinality of $\mathcal{A}$ grows exponentially, i.e., $|\mathcal{A}| = 2^{KMT}$.
We should find a more efficient way to obtain the optimal set.

Note that the selection is decided prior to the transmission of features.
We can \emph{randomize the full $a$ as its random variable counterpart $A$} following an initial distribution $P_{A}$. 
In so doing, our goal transforms to finding the optimal distribution $P_{A}^\star$ that assigns probability mass close to one to that optimal $a^\star$, i.e.,\vspace{-0.05in}
\begin{equation}
    {\arg\min}_a \mathcal{R}(a) \Rightarrow {\arg\min}_{P_{A}} \mathbb{E}_{P_{A}} [\mathcal{R}(A)].\vspace{-0.05in}
\end{equation}

This transformation has several advantages.
Since task and data distributions are arbitrary, the uniqueness of $a^\star$ can not be guaranteed.
For example, if $\bm X_1,...,\bm X_K$ are iid (independent and identically distributed), then any $a \in \mathscr{A}$ is optimal. 
In this case, it is unfaithful for selectors to return a deterministic $a$, while returning a randomized $A$ can capture all the optimal $a \in \text{\rm supp}(A) = \mathscr{A}$.
More importantly, instead of directly solving the intractable $\min_a \mathcal{R}(a)$, this randomized version, i.e., $\min_{P_{A}} \mathbb{E}_{P_{A}} [\mathcal{R}(A)]$, can be directly addressed by conducting gradient-based methods on $P_{A}$.
In the remainder of this paper, we refer to $P_{A}$ as the selection policy corresponding to $a$.

\subsection{Vectorized Representation of Modality-Task Links}
Let us introduce a representation vector of $a$ as the boldface counterpart $\bm a$ in $\{0,1\}^{KMT}$ in terms of the indicator $a_{k,m,t}$, which puts $1$ for selected links and $0$ for unselected ones.
By using such a vector, we can express more clearly how the links we chosen affect the original problem quantitatively.

Let us clarify its composition step by step. 
Each selector $\upsilon_{t}$ at receiver-side yields the selected set ${\mathcal K}_t$.
Then, we define a vector corresponding to $\mathcal{K}_{t}$, i.e.,
\begin{equation}
  \bm a_{{t}}:= (\max_{m} a_{k,m,t})_{k=1}^{K},
\end{equation}
where $\max$ over $m$ at $k$ marginalizes the modality information that is not selected by the receiver $t$.
This $\bm a_{{t}}$ denotes all transmitters selected by the receiver $t$, which is $K$-dimensional.

Then, for transmitter-side $\upsilon_k$, we also define
\begin{equation}
  \bm a_{{k}}:= (a_{k,m,t})_{m=1}^{M}.
\end{equation}
This $\bm a_{{k}}$ denotes all modalities selected by the transmitter $k$ for the task $t$, which is $M$-dimensional.

For this task $t$, we can fully express the selected modality-task links as $KM$-dimensional vector by the Kronecker product of $\bm a_{{t}}$ and $\bm a_{{k}}$, i.e., $\bm a_{{k,t}} := \bm a_{{k}} \otimes \bm a_{{t}}$, that is,
\begin{equation}\label{eq_shape}
  \begin{aligned}
    \bm a_{{k,t}} = (a_{1,1,t},...,a_{1,M,t},a_{2,1,t},...,a_{K,M,t}),
  \end{aligned}
\end{equation}
with the fact that $a_{k,m,t} = a_{k,m,t}\max_{m} a_{k,m,t}$.

Enumerating all the tasks $t \in \mathcal{T}$, we finally obtain a $KMT$-dimensional representation vector $\bm a$ for $a$, i.e.,
\begin{equation}
  \bm a = \oplus_{t=1}^T \bm a_{{k,t}} = (a_{k,m,t})_{k,m,t},
\end{equation}
where $\oplus$ denotes the concatenation operation.
In terms of the above notations, we can rewrite the constraint as
\begin{align}
  \!\!\sum_{k} \max_{m} a_{k,m,t} \leq E_t &\Leftrightarrow \|{\bm a}_{{t}}\|\leq E_t, \label{eq_st}
  \\ \!\!\sum_{m,t} a_{k,m,t} \leq E_k &\Leftrightarrow \|\textstyle\sum_t{\bm a}_{k,t}^{( k )}\| \leq E_k, \label{eq_st_reverse}
\end{align}
where ${\bm a}^{( k )}$ represents a sub-vector of ${\bm a}$ from $M{(k-1)}+1$-th dimension to $M{k}$-th dimension and $\|\cdot\|$ denotes the $\ell_1$ norm.
We also obtain the randomized version of vectors by replacing the lowercase letters with uppercase ones.

\subsection{Cooperative Selection Policy for Distributed Network}\label{sec_coop_pol}
The selectors of distributed devices are also distributed. 
We cannot directly access such a global distribution $P_{\bm A}$.
We have to express this $P_{\bm A}$ by all distributed selectors.
This means that $P_{\bm A}$ should be decomposed into components, each of which can be computed locally on each device (transmitter or receiver).
In this case, we let $P_{\bm A}$ be cooperative, satisfying the following, 
\begin{definition}\label{def_cooperative}
  We say policy $P_{\bm A}$ is cooperative if there exists an auxiliary variable $U$ such that
  \begin{equation}
      P_{\bm A{\bm A}_{\mathcal{T}}U} = P_{U} \textstyle\prod_{t=1}^TP_{{\bm A}_{{t}}|U}\prod_{k=1}^K P_{\bm A_{k,t}^{( k )}|{\bm A}_{{t}}^k U},
  \end{equation}
  where ${\bm A}_{\mathcal{T}} := ({\bm A}_{{t}})_{t}$ and ${\bm A}_{{t}}^k$ is the $k$-th component of ${\bm A}_{{t}}$.
  We use a set $\mathcal{P}_{\bm A}$ to denote all the $P_{\bm A}$ satisfying (\ref{eq_st}) and (\ref{eq_st_reverse}), i.e., falling into the feasible region $\mathscr{A} \subseteq \mathcal{A}$.
\end{definition}
Intuitively, this property is explained as follows.
We expect that each selector can locally decide its selection via a shared (random) observation, e.g., a common time frame.
Specifically, we assume a common random variable $U \sim P_U$ on all devices, independent of data.
Each ${\bm A}_{{t}}$ behaves following $P_{{\bm A}_{{t}}|U}$.
This means that each $\upsilon_{t}$ decides ${\bm A}_{{t}}$ with a common $U$.
Also, each $\bm A_{k,t}$ follows $P_{\bm A_{k,t}|{\bm A}_{{t}} U} = \prod_{k=1}^K P_{\bm A_{k,t}^{( k )}|{\bm A}_{{t}}^k U}$.
This means that each $\upsilon_{k}$ provides modalities for each task $t$ independently by observing the request signals from receivers and this common random signal $U$.
Gathering all the local decisions, we finally have the global set of modality-task links ${\bm A}$.

In other words, this implies a type of conditional independence: given the shared information, the local decision can be made individually.
In terms of Definition \ref{def_cooperative}, we can establish the problem of randomized optimal set search, corresponding to (\ref{obj_MDIB}), formally as
\begin{equation}\label{eq_relax}
    \min_{P_{\bm A}:P_{\bm A} \in \mathcal{P}_{\bm A}} \mathbb{E}_{P_{\bm A}} [\mathcal{R}(\bm A)].
\end{equation}

\begin{remark}
    This shared variable corresponds to a type of prior information, common randomness (CR) \cite{5550277}, often considered in distributed coordination. CR implicitly exists in our problem formulation. 
    In the pioneer DIB \cite[Theorem 1]{aguerri2019distributed}, this CR has been introduced and used as a time-sharing variable. When we apply the CR-based coordination, we expect that \emph{all selectors know each other's decisions when this common $U$ is observed.}
\end{remark}
\begin{remark}
    CR can be arbitrarily designed. A common degradation case is that when we set it to a constant (no additional information is shared among devices), the conditional independence on $U$ is replaced by the stronger independence:
    Each selector decides its selection independently. Even under such assumptions, we can still prove that this randomization method can reach the optimal set. This will be formally discussed in Section \ref{sec_fad}.
\end{remark}
\begin{remark}
  CR can be implemented via some well-established methods in practical distributed networks. For example, (i) Synchronized clocks with seed agreement: Devices can use loosely synchronized network time (e.g., via NTP or GPS) and a pre-agreed pseudorandom generator seed to derive common random samples per communication round. It is lightweight and does not require continuous broadcasting. (ii) We can also admit a coordinator to broadcast random samples to all devices periodically.
\end{remark}

\subsection{Transformed Objective}
In the preceding, we introduced a probability relaxation (\ref{eq_relax}) to search for the optimal set. 
Nevertheless, this expression does not reflect the coupling between $\mathcal{L}_{\text{TA}}$ and $\bm A$.
We formally show it in the following.

Recall that $\upsilon$ denotes the selection operation.
Then, we set its randomized version to $\upsilon_{p}$ distinctively, which decides $\bm A$ with $P_{\bm A} \in \mathcal{P}_{\bm A}$.
Expanding (\ref{eq_relax}), we obtain the probabilistic TADIB ($p$TADIB) problem:
\begin{equation}\label{obj_PTADIB}\tag{\textbf{P}$_\text{2}$}
    \min_{P_{\bm A}:P_{\bm A} \in \mathcal{P}_{\bm A}} \mathbb{E}_{P_{\bm A}} [\mathcal{R}(\bm A)] \Leftrightarrow \min_{{\upsilon_{p}, f: P_{\bm A} \in \mathcal{P}_{\bm A}}}
    \mathcal{L}_{\text{\rm $p$TA}}[\upsilon_{p}, f],
\end{equation}
where 
\begin{equation}
  \mathcal{L}_{\text{\rm $p$TA}}[\upsilon_{p}, f] := \mathbb{E}_{P_{\bm A}}\big[\mathcal{L}_{\text{TA}}[\upsilon_{p}, f]\big].
\end{equation}
Note that ${P}_{\bm A}$ is cooperative. 
We do a decomposition as
\begin{theorem}\label{thm_1}
  The probabilistic $\mathcal{L}_{\text{\rm $p$TA}}[\upsilon_{p}, f]$ can be rewritten into a cooperative form with $P_{\bm A}\in \mathcal{P}_{\bm A}$, i.e.,
  \begin{equation}\label{eq_ptadib}
    \begin{aligned}
      & \mathcal{L}_{\text{\rm $p$TA}}[\upsilon_{p}, f] 
      \\ & = \mathbb{E}_{P_{U}}\Big[\sum_{t} \mathbb{E}_{P_{\bm A_{k,t}|U}} \big[H(Y_t|\bm A_{k,t} \circ \bm Z_t) + \beta  \langle \bm A_{k,t}, \bm{\mathcal{L}}_{\text{\rm IB},t} \rangle\big]\Big],
    \end{aligned}
  \end{equation}
  where $\mathbb{E}_{P_{\bm A_{k,t}|U}}[\cdot]=\mathbb{E}_{P_{{\bm A}_{{t}}|U}}\big[\mathbb{E}_{P_{\bm A_{k,t}|{\bm A}_{{t}} U}}[\cdot]\big]$, $\bm Z_t := (z_{k,t}^{ m })_{m,k}$, $\langle\cdot,\cdot\rangle$ is the inner product, and $\bm{\mathcal{L}}_{\text{\rm IB},t}$ is a vector-valued function such that $\bm{\mathcal{L}}_{\text{\rm IB},t}:= \big(H(Y_t|Z_{k,t}^{ m }) + I(X_{k}^{ m };Z_{k,t}^{ m })\big)_{m,k}$. It is $KM$-dimensional and has the same shape as $\bm A_{k,t}$ (\ref{eq_shape}). Also, we let $\circ$ stand for Hadamard product (element-wise product).
\end{theorem}
\begin{proof}
	For details, please see Appendix \ref{apdx_A}.
\end{proof}
\begin{remark}
    In fact, the expression (\ref{eq_ptadib}) is essentially a rewrite of $\mathcal{L}_{\text{TA}}$ that introduces randomness and is also re-described using a vector structure. If we choose $\bm A$ always being a certain specific $\bm a$, then we recover the primal $\mathcal{L}_{\text{TA}}$.
\end{remark}

\section{Variational Coding and Probability Parameterization}\label{sec_coding}
We now turn to the third challenge, given a fixed $\bm a$.  
This corresponds to estimating the intractable $\mathcal{R}(\bm a)$.
To this end, we introduce variational decoders and computable bounds for information terms of $\mathcal{L}_{\text{\rm TA}}$.
Then, we realize all optimizable distributions as neural networks, including the policy $P_{\bm A}$.

\subsection{Variational Decoding}\label{sec_5}
Any ideal decoder $g_t$ perceives its respective modality-fused representation $\bm Z_{\mathcal{M}(t)}$ (equivalently, $\bm A_{k,t} \circ \bm Z_t$), to return the best prediction for $Y_t$ based on the representation.
This $g_t$, inducing the conditional probability $P_{Y_t|\bm A_{k,t} \circ \bm Z_t}$, and its local unimodal version $g_{k,t}$, inducing $P_{Y_t|Z_{k, t}^{ m }}$, can be fully determined by a Markov chain \cite[Appendix A]{shao2022task} if data and encoders are given.
However, directly calculating such terms via high-dimensional integrals is highly difficult \cite{aguerri2019distributed, shao2022task}.

We use variational distributions, $Q_{Y_t|\bm A_{k,t} \circ \bm Z_t}$ and $Q_{Y_t|Z_{k, t}^{ m }}$, to approximate the CE terms in RHS of $\mathcal{L}_{\text{\rm TA}}$ that depend on decoders by extending the approach \cite{deepvib}. 
\begin{proposition}\label{prop_1} The following bounds hold given any $k$ and $t$,
  \begin{itemize}
    \item[(i)] An upper bound holds for $H(Y_t|Z_{k,t}^{ m })$: 
    \begin{equation}\label{eq_lobo_v1}
      \begin{aligned}
        H(Y_t|Z_{k,t}^{ m }) & \leq \mathbb{E}_{P_{Z_{k,t}^{ m }}}[H(P_{Y_t|Z_{k,t}^{ m }},Q_{Y_t|Z_{k,t}^{ m }})],
      \end{aligned}
    \end{equation}
    where $H(P_{Y|Z},Q^{{}}_{{{Y}}^{}|Z}) \!:= -\int p(y|{z})\log {{}{q}}(y|z)dy$.
    \item[(ii)] An upper bound holds for $H(Y_t|\bm A_{k,t} \circ \bm Z_t)$ given $\bm A_{k,t}$: 
    \begin{equation}\label{eq_lobo_v2}
      \begin{aligned}
        & H(Y_t|\bm A_{k,t} \circ \bm Z_t) 
        \\ & \qquad \leq 
        \mathbb{E}_{P_{\bm Z_t|\bm A_{k,t}}}[H(P_{Y_t|\bm A_{k,t} \circ \bm Z_t},Q_{Y_t|\bm A_{k,t} \circ \bm Z_t})].
      \end{aligned}
    \end{equation}
  \end{itemize}
\end{proposition}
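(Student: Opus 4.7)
The plan is to derive both bounds as direct instances of the standard variational upper bound on conditional entropy, which follows from the non-negativity of the Kullback-Leibler divergence (i.e., Gibbs' inequality). The core observation, applied pointwise in the conditioning variable, is that for any two conditional densities $p(\cdot\mid z)$ and $q(\cdot\mid z)$ (with $p(\cdot\mid z)\ll q(\cdot\mid z)$),
\begin{equation*}
0 \leq D_{\text{\rm KL}}(P_{Y|Z=z}\,\|\,Q_{Y|Z=z}) = \mathbb{E}_{P_{Y|Z=z}}[\log p(Y\mid z) - \log q(Y\mid z)],
\end{equation*}
which rearranges into the pointwise Shannon-vs.-cross-entropy inequality $H(P_{Y|Z=z}) \leq H(P_{Y|Z=z},\,Q_{Y|Z=z})$.

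For part (i), I would expand the conditional entropy as a mixture over the marginal of the conditioning variable,
\begin{equation*}
H(Y_t\mid Z_{k,t}^{m}) = \mathbb{E}_{P_{Z_{k,t}^{m}}}\bigl[H\bigl(P_{Y_t\mid Z_{k,t}^{m}}\bigr)\bigr],
\end{equation*}
apply the pointwise inequality above to the integrand (with $Z := Z_{k,t}^{m}$, $Y := Y_t$), and take the outer expectation over $P_{Z_{k,t}^{m}}$. This delivers (\ref{eq_lobo_v1}) in one step. I would briefly note the mild absolute-continuity condition $P_{Y_t\mid Z_{k,t}^{m}}\ll Q_{Y_t\mid Z_{k,t}^{m}}$, which is implicit in the variational family construction in the style of \cite{deepvib}.

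For part (ii), the argument is structurally identical, with $\bm A_t$ treated as given. Since the masked representation $\bm A_t\circ\bm Z_t$ is then a deterministic function of $\bm Z_t$, conditioning on $\bm A_t\circ\bm Z_t$ (given $\bm A_t$) is exactly conditioning on the active coordinates of $\bm Z_t$, so that
\begin{equation*}
H(Y_t\mid\bm A_t\circ\bm Z_t) = \mathbb{E}_{P_{\bm Z_t\mid\bm A_t}}\bigl[H\bigl(P_{Y_t\mid\bm A_t\circ\bm Z_t}\bigr)\bigr].
\end{equation*}
Replaying the pointwise Gibbs inequality with the fused representation in place of $Z$ and averaging over $P_{\bm Z_t\mid\bm A_t}$ yields (\ref{eq_lobo_v2}).

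The main obstacle is not analytical but bookkeeping: in part (ii) one must keep the conditioning on $\bm A_t$ straight and ensure the variational family $Q_{Y_t\mid\bm A_t\circ\bm Z_t}$ is well-defined as a function of the masked input (so that zeroed coordinates do not pollute the density), and also verify $P_{Y_t\mid\bm A_t\circ\bm Z_t}\ll Q_{Y_t\mid\bm A_t\circ\bm Z_t}$ for almost every realization. Once this bookkeeping is discharged, both inequalities reduce to a single application of Gibbs' inequality followed by Fubini to interchange the expectation over the conditioning variable with the cross-entropy.
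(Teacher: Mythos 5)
Your proposal is correct and follows essentially the same route as the paper: both proofs reduce to the non-negativity of $\mathbb{E}_{P_Z}[D_{\text{\rm KL}}(P_{Y|Z}\|Q_{Y|Z})]$, and both handle part (ii) by substituting the fused representation $\bm A_t\circ\bm Z_t$ for $Z$ and re-expressing the outer expectation as one over $P_{\bm Z_t|\bm A_t}$ (the paper invokes the law of the unconscious statistician for this step, matching your bookkeeping remark). The only cosmetic difference is that the paper first establishes the variational lower bound on $I(Z;Y)$ and then converts via $H(Y|Z)=H(Y)-I(Y;Z)$, whereas you bound the conditional entropy pointwise and average directly.
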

\begin{proof}
	For details, please see Appendix \ref{apdx_A}.
\end{proof}
Based on Theorem \ref{thm_1} and Proposition \ref{prop_1}, we can establish a variational bound $\mathcal{L}_{\text{\rm $p$VTA}}[\upsilon_{p}, f, g] \geq \mathcal{L}_{\text{\rm $p$TA}}[\upsilon_{p}, f]$ with a simple replacement: we relax the LHS of (\ref{eq_lobo_v1}) and (\ref{eq_lobo_v2}) to the RHS.
In $\mathcal{L}_{\text{\rm $p$VTA}}[\upsilon_{p}, f, g]$, we use $g$ to emphasize all the variational decoders by default. 
The additional ``V'' in ``$p$VTA'' refers to ``variational''.

\subsection{Probability Parameterization}\label{sec_param}
Parameterization techniques enable optimizing tractable parameters in a typical finite-dimensional $\mathbb{R}$-linear space, rather than processing the function itself, which is often difficult. 
In this part, we parameterize $\upsilon_{p}$, $f$, and $g$, respectively, by deep neural networks (DNNs).
Let $N\!N(\cdot)$ be a DNN.
We detail the selection policy and coding modules as follows.

\subsubsection{Selection}
All the selectors must cooperatively generate an $\bm A$ from the $\upsilon_{p}$-induced probability $P_{\theta; \bm A|U}$ where we define $\theta :=  (\theta_{k})_{k=1}^K\oplus (\theta_{t})_{t=1}^T$.
We now describe the structure of this policy following Definition \ref{def_cooperative}.
The parameterized selector $\upsilon_{k}$ is denoted by $\upsilon_{\theta_{k}}$, which induces different $P_{\theta_{k}; \bm A_{k,t}^{( k )}|{\bm A}_{{t}}^k U}$ for different task $t$ with a unified $\theta_{k}$.
The parameterized selector $\upsilon_{t}$ is $\upsilon_{\theta_{t}}$, which induces $P_{\theta_{t};{\bm A}_{{t}}|U}$.
CR $U$ is pre-defined.
Note that all such distributions are essentially discrete.

Take $P_{\theta_{t}}$ as an example.
When we directly implement $p_{\theta_{t}}$, i.e., the conditional probability mass of $P_{\theta_{t}}$, by $\sigma(N\!N(\cdot;\theta_{t}))$, where $\sigma$ denotes softmax, the dimension of this DNN's output is exponentially large according to the dimension of ${\bm a}_{{t}}$ due to that it must be capable of describing all possible $0$-$1$ vectors ${\bm a}_{{t}}$.
More precisely, this holds if for every possible realization ${\bm a}_{{t}}$, we use a corresponding dimension of the output to store its mass. 
That is, we leverage a simple categorical distribution to represent $P_{\theta_{t}}$.
Realizing $p_{\theta_{t}}$ by this scheme requires an output dimension on the order of $\mathcal{O}(2^K)$.
The high space complexity causes this simple DNN modeling to fail at large- or medium-scale device communication.

To fully tackle this issue in selection parameterization, we apply a type of structural decomposition.
We carefully design a DNN for $p_{\theta_{t}}$ in the following process.
We implement a DNN $N\!N(\cdot;\theta_{t})$ whose input is $u\sim U$ and whose output is a non-normalized $\mathbb{R}$-vector $\pi_{{{t}}}$, which is $E_t+K$-dimensional.
This $\pi_{{{t}}}$ can be divided into two sub-vectors: the first $E_t$-dimensional part $\pi_{{{t}},{\text{num}}}$ and also the second $K$-dimensional part $\pi_{{{t}},{\text{prob}}}$.
The number of selected transmitters
$\|{\bm a}_{{t}}\|$ is decided by the mass $\sigma(\pi_{{{t}},{\text{num}}})$.
After given $\|{\bm a}_{{t}}\|$, we can generate ${\bm a}_{{t}}$ following a probability corresponding to a joint distribution that yields a $\|{\bm a}_{{t}}\|$-times without-replacement sampling from $\sigma(\pi_{{{t}},{\text{prob}}})$.
We can represent this joint distribution as $\sigma^{\times\|{\bm a}_{{t}}\|}(\pi_{{{t}},{\text{prob}}})$.
For any value ${\bm a}_{{t}}$, its mass is exactly a product of $\sigma(\pi_{{{t}},{\text{num}}})({\bm a}_{{t}})$ and $\sigma^{\times\|{\bm a}_{{t}}\|}(\pi_{{{t}},{\text{prob}}})({\bm a}_{{t}})$ according to (probability) product rule.

This construction can be analogized and understood as that we construct a ``point process'' by using a DNN. 
The number of points (transmitters) is first determined, then their positions (indices) are decided.
$P_{\theta_{t}}$ can be regarded as a joint distribution over all points. 
This means that we introduce a structural random prior to $P_{\theta_{t}}$ to reduce the space complexity $\mathcal{O}(2^K)$ to $\mathcal{O}(K)$, where an exponential compression is achieved.

In the preceding, $P_{\theta_{t}}$ is used to show how to use DNN to establish each $t$-th receiver-side selector.
For the transmitter-side, we perform a similar treatment with a $E_k+M$-dimensional vector.
We omit this redundant part.

\subsubsection{Coding}
The parameterized encoder is denoted by $f_\psi$, where we define $\psi := (\psi_{k}^{ m })_{ m,k}$. That is, for a modality $m$ at a transmitter $k$, we assign it individual parameters $\psi_{k}^{ m }$. For a given task $t$, we have the unimodal encoder for a modality-task link $(k,m,t)$ as $f_{\psi_k^{ m }}(\cdot,t)$, which induces $P_{\psi_{k}^{ m };Z_{k,t}^{ m }|X_{k}^{ m }}$.

Each unimodal encoder is implemented as a standard Gaussian encoder \cite{aguerri2019distributed, shao2022task}.
In other words, we utilize $N\!N(\cdot,\cdot;\psi_{k}^{ m })$ to generate $\mathcal{N}(z|\mu, \Sigma)$ (we omit all superscripts and subscripts for convenience), in which the DNN's input is $(x,t)$ and the output is $(\mu, \Sigma)$.
$\mu$ is mean vector and $\Sigma$ is covariance matrix.
$\Sigma$ is usually diagonal, we can obtain its vectorized version by multiplying its RHS by a uniform vector $\bm 1$ as $\Sigma\bm 1$.
We apply the reparameterization trick to draw $z$, i.e., $z = \mu + \sqrt{\Sigma\bm 1} \circ \epsilon$ with $\epsilon \sim \mathcal{N}(\bm 0, \bm I)$, to allow gradient backpropagation \cite{deepvib}.

The parameterized decoder is denoted by $g_{\phi,\varphi}$.
The multimodal decoder that acts on the fused representation is denoted by $g_{\phi_t}$ for task $t$, inducing the variational $Q_{\phi_t;Y_t|\bm A_{k,t} \circ \bm Z_t}$. For transmitter $k$ 's modality $m$, the unimodal decoder is denoted by $g_{\varphi_t}$, which induces $Q_{\varphi_t; Y_t|Z_{k,t}^{ m }}$. 
It means that for different modalities, we adopt a common $\varphi_t$ for efficient computation.
For any $k$ and $m$, $g_{\varphi_t}$ acts on the local modality feature $z_{k,t}^{ m }$ and the index $k$ and $m$, i.e., ${g}_{\varphi_t}(\cdot; k,  m )$.
Finally, we realize decoders by standard DNNs.

\section{Joint Optimization of Selection and Coding}\label{sec_opt}
Note that if we aim to find the optimal selection of modality-task links through $\mathcal{R}$, we must also find corresponding optimal codecs implied in the metric $\mathcal{R}$.
This implies a strong coupling between selection and coding. Therefore, the joint optimization of selection and coding is a natural choice.

In this section, we implement an algorithm to optimize both simultaneously. In so doing, through (\ref{eq_commu}), we obtain the optimal selection as well as the solution to the original problem (\ref{obj_MDIB}).
In this regard, we establish the empirical objective of ${\mathcal{L}}_{\text{\rm $p$VTA}}$ for the parameterized modules with finite samples.
Then, we minimize it with gradient-based methods.

\subsection{Empirical Objective}\label{subsec_opt}
\subsubsection{Dataset Settings} 
For any task $t\in \mathcal{T}$, all available labeled data are depicted as $\mathcal{D}_t:=\{(\bm x_{1,t}^i,...,\bm x_{K,t}^i,y_t^i)\}_{i=1}^{N}$, where all samples are iid and $\bm x_{k,t}^i := (x_{k,t}^{ m ,i})_{ m }$.
In which, we add a subscript $t$ at each $\bm x_{k,t}^{i}$ to emphasize that the sample belongs to the dataset specific to a task $t$.
We assume a unified data sample size $|\mathcal{D}_t|=N$ for any $t\in \mathcal{T}$.
Then, $\mathcal{L}_{\text{\rm $p$VTA}}$ can be estimated by these labeled data $\mathcal{D}_1,...,\mathcal{D}_T$ empirically.
\subsubsection{Empirical Estimation} 
We estimate three key terms in $\mathcal{L}_{\text{\rm $p$VTA}}$ as follows.
At any $t$, for any $k \in \mathcal{K}_t$, with $1 \leq i,j \leq N$, we directly estimate the MI term integrated in $\bm{\mathcal{L}}_{\text{\rm IB},t}$. 
\begin{equation}\label{eq_upbo_club_empirical}
  \begin{aligned}
    I(X_{k}^{ m };Z_{k,t}^{ m }) & = \mathbb{E}_{P_{X_{k}^{ m }Z_{k,t}^{ m }}}\bigg[\log \frac{p_{\psi_{k}^{ m }}(z_{k,t}^{ m }|x_{k}^{ m })}{\mathbb{E}_{P_{X_{k}^m}}[p_{\psi_{k}^{ m }}(z_{k,t}^{ m }|x_{k}^{ m })]}\bigg]
    \\ & \approx \frac{1}{N}\sum_{i=1}^{N}\log \frac{p_{\psi_{k}^{ m }}(z_{k,t}^{ m ,i}|x_{k,t}^{ m , i})}{\frac{1}{N}\sum_{j=1}^{N} p_{\psi_{k}^{ m }}(z_{k,t}^{ m ,i}|x_{k,t}^{ m , j})}.
  \end{aligned}
\end{equation}
We also estimate cross-entropy terms
\begin{equation}\label{eq_lobo_v1_empirical}
  \begin{aligned}
    & \mathbb{E}_{P_{Z_{k,t}^{ m }}}[H(P_{Y_t|Z_{k,t}^{ m }}, Q_{Y_t|Z_{k,t}^{ m }})]
    \\ & \quad \approx -\frac{1}{N}\sum_{i=1}^{N}\log q_{\varphi_t}(y_t^i|z_{k,t}^{ m ,i};k, m ).
  \end{aligned}
\end{equation}
Given $\bm A_{k,t}$, we have
\begin{equation}\label{eq_lobo_v2_empirical}
  \begin{aligned}
    & \mathbb{E}_{P_{\bm Z_t|\bm A_{k,t}}}[H(P_{Y_t|\bm A_{k,t} \circ \bm Z_t},Q_{Y_t|\bm A_{k,t} \circ \bm Z_t})] 
    \\ & \quad \approx -\frac{1}{N}\sum_{i=1}^{N}\log q_{\phi_t}(y_t^i|\bm A_{k,t} \circ \bm z_t^i).
  \end{aligned}
\end{equation}

Since samples in $\mathcal{D}_t$ are iid, our empirical approximations, i.e., (\ref{eq_upbo_club_empirical})-(\ref{eq_lobo_v2_empirical}), are asymptotically unbiased. 
In particular, {the crucial MI term $I(X_{k}^{ m };Z_{k,t}^{ m })$ is successfully estimated without variational or contrastive log-ratio bounds \cite{cheng2020club} since the parameterized conditional $p(z|x)$ is explicitly given in this case.} We can approximate the marginal $p(z) = \mathbb{E}_{x\sim p(x)}[p(z|x)] \approx \frac{1}{N}\sum_{i=1}^Np(z|x^i)$ with $\mathcal{D}_t$.
In practice, we often use a mini-batch of $\mathcal{D}_t$ to compute empirical estimations for efficiency.

Combine (\ref{eq_upbo_club_empirical})-(\ref{eq_lobo_v2_empirical}) to derive the empirical version, 
\begin{equation}\label{obj_TADVIB_empirical}\tag{\textbf{P}$_\text{3}$}
  \!\!\!\!\!\!
  \begin{aligned}
    &\min_{\theta, \psi, \phi, \varphi: P_{\theta;\bm A} \in \mathcal{P}_{\bm A}} \hat{\mathcal{L}}_{\text{\rm $p$VTA}}(\theta, \psi, \phi, \varphi)
    \\ & := \frac{1}{N}\sum_{i=1}^{N}\mathbb{E}_{P_{U}}\Big[\sum_{t=1}^T \mathbb{E}_{P_{\theta;\bm A_{k,t}|U}} \big[-\log q_{\phi_t}(y_t^i|\bm A_{k,t} \circ \bm z_t^i) 
    \\ &  \qquad \qquad \qquad \qquad \qquad \qquad \quad \ +  \beta \langle \bm A_{k,t}, {\hat{\bm{\mathcal{L}}}}_{\text{\rm IB},t}^i \rangle \big]\Big],
  \end{aligned}
  \!\!\!\!\!\!
\end{equation}
where ${\hat{\bm{\mathcal{L}}}}_{\text{\rm IB},t}$ is the empirical ${{\bm{\mathcal{L}}}}_{\text{\rm IB},t}$ with (\ref{eq_upbo_club_empirical}) and (\ref{eq_lobo_v1_empirical}). We have the analogue of (\ref{eq_commu}) also in the empirical setting, i.e.,
\begin{equation}\label{eq_commu2}
    \begin{aligned}
      \min_{\theta, \psi, \phi, \varphi}\hat{\mathcal{L}}_{\text{\rm $p$VTA}} = \min_{\theta} \min_{\psi, \phi, \varphi} \hat{\mathcal{L}}_{\text{\rm $p$VTA}}
      \approx \min_{\theta} \mathbb{E}_{P_{\theta;\bm A}} [{\mathcal{R}}(\bm A)].
    \end{aligned}
\end{equation}

\subsection{Gradient-based Optimization}\label{subsec_opt2}
It is challenging to directly backpropagate the gradients of all parameters in addressing (\ref{obj_TADVIB_empirical}) using conventional gradient descent methods, even when reparameterization techniques are used to design encoders. 
This difficulty arises because of the peculiarities of $P_{\theta;\bm A}$. It is a discrete distribution and we cannot directly apply a differentiable propagation of gradients due to sampling.
Let us detailedly discuss the update of parameters in the following. 

\subsubsection{The Update of $\theta$} 
Note that $\hat{\mathcal{L}}_{\text{\rm $p$VTA}}(\cdot)$ is essentially a function of $\theta$.
Let us abbreviate $\frac{1}{N}\sum_{i=1}^{N}\big[-\log q_{\phi_t}(y_t^i|\bm A_{k,t} \circ \bm z_t^i) + \beta \langle \bm A_{k,t}, {\hat{\bm{\mathcal{L}}}}_{\text{\rm IB},t}^i \rangle \big]$ as $\hat{\mathcal{L}}_{t}(\bm A_{k,t})$ and omit $\psi, \phi, \varphi$.
Then, we can transform the gradient w.r.t. $\theta$ into a score function, namely the gradient of a log-likelihood, i.e., \vspace{-0.1in}
\begin{equation}\label{eq_grad_estimate_1}
  \begin{aligned}
    & \nabla_{\theta}\hat{\mathcal{L}}_{\text{\rm $p$VTA}}(\theta) = \mathbb{E}_{P_{U}}\Big[\sum_{t,\bm a_{k,t}} \nabla_{\theta} p_{\theta}(\bm a_{k,t}|u) \hat{\mathcal{L}}_{t}(\bm a_{k,t})\Big]
    \\ & = \mathbb{E}_{P_{U}}\Big[\sum_{t,\bm a_{k,t}} p_{\theta}(\bm a_{k,t}|u) \nabla_{\theta} \log p_{\theta}(\bm a_{k,t}|u) \hat{\mathcal{L}}_{t}(\bm a_{k,t})\Big]
    \\ & \approx \nabla_{\theta} \underbrace{\frac{1}{N}\sum_{i=1}^{N}\sum_{t=1}^T\log p_{\theta}(\bm a_{k,t}^i|u^i) \hat{\mathcal{L}}_{t}(\bm a_{k,t}^i)}_{\text{empirical estimation of log-likelihood}},
  \end{aligned}\vspace{-0.1in}
\end{equation}
where the last approximation follows the Gibbs sampling, i.e., for every sample $i$, we assign realizations of $U$ and $\bm a_{k,t}$ as $u^i$ and $\bm a_{k,t}^i$. 
Estimation (\ref{eq_grad_estimate_1}) is unbiased and is also called \emph{policy gradient} following classical policy optimization \cite{sutton1999policy}.

\subsubsection{The Update of Other Parameters}
Unlike the above complex transformations for $\theta$, the reparameterization trick \cite{deepvib} introduced in Section \ref{sec_param} enables the gradients of parameters $\psi, \phi, \varphi$ being directly calculated via backprop\cite{rumelhart1986learning}.

\vspace{-0.05in}
\subsection{The Impact of Constrained Feasible Region}\label{subsec_constraint}
In the preceding, we usually assume the variational distribution of selectors being in $\mathcal{P}_{\bm A}$ by default.
But the parameterized $P_{\theta; \bm A} \notin \mathcal{P}_{\bm A}$ usually holds, since DNN-based distributions are randomly initialized.
In this general case, $\text{\rm supp}(\bm A)$ may fail to meet constraints (\ref{eq_st}) and (\ref{eq_st_reverse}) in training, e.g, it is possible for any transmitter $k$ to have more than $E_k$ times task inference requests through its selector.
E.g., $E_k+1$ receivers request the same transmitter $k$ to solve different tasks at the same time, which violates the constraint (\ref{eq_st_reverse}) if $\upsilon_k$ assigns each task $t$ a modality.
To tackle this, we add a random choice mechanism at transmitter-side that 
if the transmitter $k$ receives $E_k^\prime > E_k$ times requests from receivers, then it will uniformly sample $E_k$ requests from these $E_k^\prime$ ones to execute.

This addtional mechanism for addressing the concerns of the constraint seems to introduce a mismatch that the set of modality-task links $\bm A \sim P_{\theta; \bm A}$ we choosen is somehow different from the real selected set $\bm A^\prime$ after modification.
We investigate this mismatch in detail and point out that the policy gradient still maintains the correct direction.

Specifically, the mechanism generates a stochastic mapping with a kernel $P_{\bm A^\prime|\bm A}$, which is a naturally non-parameterizable conditional without exact forms, where the output $\bm A^\prime$ is regular that satisfies (\ref{eq_st}) and (\ref{eq_st_reverse}).
This means that we construct a new $\widetilde{P}_{\theta; \bm A^\prime}=  \mathbb{E}_{P_{\theta;\bm A}}[P_{\bm A^\prime|\bm A}] \in \mathcal{P}_{\bm A}$ which is always regular with the externalized transition probability to be in the feasible region.

We replace ${P}_{\theta; \bm A}$ with its regular version to recalculate the policy gradient as (\ref{eq_grad_estimate_1}), i.e.,\vspace{-0.05in}
\begin{equation}\label{eq_grad_estimate_2}
  \begin{aligned}
    & \nabla_{\theta}\hat{\mathcal{L}}_{\text{\rm $p$VTA}}(\theta) = \nabla_{\theta}\mathbb{E}_{P_{U}}\Big[\sum_{t,\bm a^\prime} \widetilde{p}_{\theta}(\bm a^\prime|u) \hat{\mathcal{L}}_{t}(\bm a^\prime)\Big]
    \\ & = \mathbb{E}_{P_{U}}\Big[\sum_{t,\bm a^\prime, \bm a} p(\bm a^\prime|\bm a)p_{\theta}(\bm a|u)\nabla_{\theta} \log p_{\theta}(\bm a|u) \hat{\mathcal{L}}_{t}(\bm a^\prime)\Big]
    \\ & \approx \nabla_{\theta} \frac{1}{N}\sum_{i=1}^{N}\sum_{t=1}^T \log p_{\theta}(\bm a^i|u^i) \hat{\mathcal{L}}_{t}(\bm a^{i,\prime}),
  \end{aligned}
\end{equation}
where we omitted the subscripts $k, t$ of $\bm a_{k,t}$ for convenience.
In which, $\widetilde{p}_{\theta}(\bm a^\prime|u) = \sum_{\bm a} p(\bm a^\prime|\bm a)p_{\theta}(\bm a|u)$ is basically derived from $\widetilde{P}_{\theta; \bm A^\prime}=  \mathbb{E}_{P_{\theta;\bm A}}[P_{\bm A^\prime|\bm A}]$ with the fact that $\bm A^\prime \leftrightarrow \bm A \leftrightarrow U$ for any $t$.
The last approximation is also vaild due to the Gibbs sampling, while the difference is that besides $p_{\theta}(\bm a|u)$, we also empirically estimate the transition probability $p(\bm a^\prime|\bm a)$.

By comparing (\ref{eq_grad_estimate_2}) with (\ref{eq_grad_estimate_1}), we observe that the empirical estimation of log-likelihood is basically the same, except that $\hat{\mathcal{L}}_{t}$ acts on the regularized version $\bm a^\prime$ rather than the primal $\bm a$.
This proves our claim that the policy gradient still maintains the correct direction.
This is also the reason why we use policy gradient, which works well when the workflow involves a non-parameterizable $P_{\bm A|\bm A^\prime}$, while these cases, nevertheless, cannot be tackled by traditional reparameterization, just as we do for other parameters.

\subsection{Training and Inference Procedure}
Combining the above, we summarize the way to obtain the optimal set of modality-task links as well as the corresponding codecs in Algorithm \ref{alg:training} (training) and Algorithm \ref{alg:inference} (inference).

In Algorithm \ref{alg:training}, we solve the probabilistic relaxation of the primal (\ref{obj_MDIB}).
We get the optimal policy $\theta^\star$ and corresponding codecs $\psi^\star$, $\phi^\star$, $\varphi^\star$.
By sampling $\bm a^\star$ from $P_{\theta^\star; \bm A|U}$ in Algorithm \ref{alg:inference}, we achieve the numerical solution of (\ref{obj_MDIB}).
In Step \ref{alg1_stp6}, the synchronizing mechanism generates a common randomness (CR) sample $u \sim P_U$ through synchronized signals such as time-sharing. 
This ensures all devices share the same auxiliary variable $u$ for coordinated selection decisions.
For degenerate cases, we can also set $u$ to be a constant.
In Step \ref{alg1_stp8}, every receiver $t$ independently samples its transmitter selection vector ${\bm a}_{{t}}$ from the conditional distribution $P_{\theta_t;{\bm A}_{{t}}|U=u}$. 
This receiver-side selection determines which transmitters $k$ with $\bm {a}_{{t}}^k=1$ will be requested to provide features.
The selection satisfies the receiver-side constraint (\ref{eq_st}), i.e., $\|{\bm a}_{{t}}\| \leq E_t$.
For general cases discussed in Section \ref{subsec_constraint}, the condition handling mechanism can be seamlessly integrated by replacing the gradient update (\ref{eq_grad_estimate_1}) with its regularized version (\ref{eq_grad_estimate_2}).
In Step \ref{alg1_stp9}, each selected transmitter $k$ (with $\bm {a}_{{t}}^k=1$) samples its modality selection vector ${\bm a}_{k,t}^{( k )}$ from $P_{\theta_k; \bm A_{k,t}^{( k )}|{\bm A}_{{t}}^k, U}$. 
This transmitter-side selection determines which modalities $ m $ with $\bm{a}_{k,t}^{(k), m }=1$ will be encoded and transmitted to receiver $t$.
The cooperative policy structure (Definition \ref{def_cooperative}) enables each transmitter to make decisions based on both the receiver's request ${\bm A}_{{t}}^k$ (the $k$-th component of ${\bm a}_{{t}}$) and the common randomness $u$.
In Step \ref{alg1_stp11}, for each selected modality $ m $ with $\bm{a}_{k,t}^{(k), m }=1$, a data sample $x_{k,t}^{ m }$ is drawn from the corresponding modality dataset $\mathcal{D}_{k,t}^{ m }$.
To ensure consistent gradient estimation across all tasks, the sampling from distributed datasets is synchronized, meaning that for each sample index $i$, all tasks $t$ use the $i$-th sample from their respective local datasets that aligns with a same label $y_t^i$.
In Steps \ref{alg1_stp11}-\ref{alg1_stp16}, the encoding and decoding processes are executed.
Each encoder $f_{\psi_{k}^{ m }}$ generates the stochastic feature $z_{k,t}^{ m }$ using the reparameterization trick, enabling gradient backpropagation through the sampling operation.
After all features are transmitted to receiver $t$, the multimodal decoder $g_{\phi_t}$ produces the fused prediction $\hat{y}_t$ from $\bm A_{k,t} \circ \bm z_t$.
Additionally, each unimodal local decoder $g_{\varphi_t}$ generates local predictions $\hat{y}_{k,t}^{ m }$ from individual features $z_{k, t}^{ m }$.
These local decoders serve as regularization terms in the objective (\ref{eq_ptadib}), encouraging encoders to preserve task-relevant information in each unimodal feature independently, following the DIB principle \cite{aguerri2019distributed}.
In Steps \ref{alg1_stp18}-\ref{alg1_stp19}, a mini-batch of $B$ samples is collected by repeating Steps \ref{alg1_stp6}-\ref{alg1_stp18} $B$ times. 
The empirical objective $\hat{\mathcal{L}}_{\text{\rm $p$VTA}}$ defined in (\ref{obj_TADVIB_empirical}) is then computed using this mini-batch, approximating the full dataset expectation.
\begin{algorithm}[t]
	\caption{Training Procedure of $p$TADIB}
	\label{alg:training}
	\KwIn{datasets $\{\mathcal{D}_t\}_{t\in \mathcal{T}}$, batch size $B$.}
	\KwOut{optimized parameters $\theta$, $\psi$, $\phi$, and $\varphi$.}
	initialize $\upsilon_{p;\theta}, f_{\psi}, g_{\phi,\varphi}$;

	\While{\textnormal{not converged}}{
    yield CR $u \sim P_{U}$ from synchronizing mechanisms, e.g., time-sharing; \label{alg1_stp6}

    \For{\text{\rm each receiver} $t\in {\mathcal{T}}$ \text{\rm in parallel}}{
      randomly select ${\bm a}_{{t}} \sim P_{\theta_t;{\bm A}_{{t}}|U=u}$; \label{alg1_stp8}

      request low-rate representations from transmitters characterized by ${\bm a}_{{t}}$;
      
      \For{\text{\rm each transmitter} $k$ \text{\rm with} $\bm {a}_{{t}}^k=1$ \text{\rm in parallel}}{
        randomly select ${\bm a}_{k,t}^{( k )} \! \sim \! P_{\theta_k; \bm A_{k,t}^{( k )}|{\bm A}_{{t}}^k=1, U=u}$;  \label{alg1_stp9}

        \For{\text{\rm each modality} $ m $ \text{\rm with} $\bm{a}_{k,t}^{(k), m }=1$}{
          synchronously sample $x_{k,t}^{ m }$ from $\mathcal{D}_{k,t}^{ m }:= \{x_{k,t}^{ m ,i}\}_{i=1}^{N}$ of $\mathcal{D}_t$; \label{alg1_stp11}

          encode $z_{k,t}^{ m }$ given $x_{k,t}^{ m }$ as $z_{k,t}^{ m } = f_{\psi_{k}^{ m }}(x_{k,t}^{ m },t) = \mu_{k,t}^{ m } + \sqrt{\Sigma_{k,t}^{ m }\bm 1} \circ \epsilon$, where $\epsilon \sim \mathcal{N}(\bm 0, \bm I)$;
        }

        send all unimodal features $z_{k,t}^{ m }|_{\bm{a}_{k,t}^{(k), m }=1}$ to the receiver $t$;
      }  
      receive all features as $\bm a_{k,t} \circ \bm z_t$;

      infer $\hat{y}_t$ from the fused representation via $\hat{y}_t \sim q_{\phi_t}(Y_t|\bm a_{k,t} \circ \bm z_t)$;

      each unimodal decoder with $\bm{a}_{k,t}^{(k), m }=1$ infers the local prediction $\hat{y}_{k,t}^{ m }$ via $\hat{y}_{k,t}^{ m }\sim q_{\varphi_t}(Y_t|z_{k, t}^{ m };k,{ m })$; \label{alg1_stp16}
    }
    collect all above data as a sample; \label{alg1_stp18}
    
    \If{\text{\rm $B$-batch samples are collected}}{
      compute $\hat{\mathcal{L}}_{\text{\rm $p$VTA}}$ with this mini-batch; \label{alg1_stp19}

      update parameters $\theta$, $\psi$, $\phi$, and $\varphi$ through techniques in Section \ref{subsec_opt2}; \label{alg1_stp20}
    }
	}
\end{algorithm}
\setlength{\textfloatsep}{1pt}
\begin{algorithm}[t]
	\caption{Inference Procedure of $p$TADIB}
	\label{alg:inference}
	\KwIn{trained encoders, decoders, and selectors.}
	\KwOut{predictions $\{\hat{y}_t\}_{t \in \mathcal{T}}$.}
  \tcp{Phase 1: Modality Selection}
  
  yield CR $u \sim P_{U}$; \label{alg2_stp2}

  decide $\bm a \sim P_{\theta; \bm A|U=u}$ induced by all selectors; \label{alg2_stp3}

  \tcp{Phase 2: Semantic Communication}

	\While{\textnormal{true}}{
    
    \For{\text{\rm each receiver} $t\in {\mathcal{T}}$ \text{\rm in parallel}}{
      request low-dimensional representations from transmitters characterized by ${\bm a}_{{t}}$;
      
      \For{\text{\rm each transmitter} $k$ \text{\rm with} $\bm {a}_{{t}}^k=1$ \text{\rm in parallel}}{
        locally observe its raw data $\bm x_{k}$ with synchronization;

        \For{\text{\rm each modality} $ m $ \text{\rm with} $\bm{a}_{k,t}^{(k), m }=1$}{
          encode $z_{k,t}^{ m }$ given $x_{k}^{ m }$ of $\bm x_{k}$;
        }

        send all unimodal features $z_{k,t}^{ m }|_{\bm{a}_{k,t}^{(k), m }=1}$ to the receiver $t$;
      }  
      receive all features as $\bm A_{k,t} \circ \bm Z_t$;

      infer the prediction $\hat{y}_t$ via $\hat{y}_t \sim q_{\phi_t}(Y_t|\bm A_{k,t} \circ \bm z_t)$;
    }   
	}
\end{algorithm}
\setlength{\textfloatsep}{1pt}
In Step \ref{alg1_stp20}, all parameters $\theta$, $\psi$, $\phi$, and $\varphi$ are updated jointly using gradient-based optimization.
For selection parameters $\theta$, the policy gradient method (\ref{eq_grad_estimate_1}) or (\ref{eq_grad_estimate_2}) is applied to handle the discrete sampling operation.
For codec parameters $\psi$, $\phi$, and $\varphi$, standard backpropagation is used thanks to the reparameterization trick.
The backpropagation is initiated at each receiver $t$ in parallel, with gradients flowing back through the decoders, encoders, and selectors.

The inference procedure is summarized in Algorithm \ref{alg:inference}. 

In Step \ref{alg2_stp2}, a single CR sample $u \sim P_{U}$ is drawn once at the beginning using synchronized signals across all devices.
In Step \ref{alg2_stp3}, based on this fixed $u$, all selectors cooperatively determine a deterministic selection $\bm a$ from the trained policy $P_{\theta; \bm A|U=u}$.
This can be done by taking the mode or expectation of the conditional distribution given $u$.
Once determined, this selection $\bm a$ remains fixed throughout the inference phase, maintaining stable communication links between receivers and their selected transmitters with selected modalities.
This deterministic selection does not compromise optimality: if the trained policy $P_{\theta; \bm A|U}$ concentrates its probability mass on optimal selections, then the derived $\bm a$ is still optimal (see Theorem \ref{thm_2}).
During the continuous inference loop, each receiver $t$ requests features only from transmitters and modalities specified by $\bm a$, and each transmitter $k$ encodes and transmits only the requested modalities.
The receiver then fuses the received features $\bm A_{k,t} \circ \bm Z_t$ using the trained multimodal decoder $g_{\phi_t}$ to produce predictions $\hat{y}_t$.
Note that Algorithm \ref{alg:inference} does not involve the local unimodal decoders $g_{\varphi_t}$, which are only used during training as regularization to support the derivation of optimized encoders, decoders, and selectors, following the work \cite{aguerri2019distributed}.
Finally, we emphasize that CR $U$ is low-dimensional (e.g., 24-48 dimensions in our experiments) and needs to be generated only once per scheduling window (e.g., per epoch in training, or per session in inference). 
The communication overhead for distributing $U$ (if required) is negligible versus the feature data transmission. 
Furthermore, in many standardized wireless protocols (e.g., 5G NR, WiFi 6), time-synchronized frames are already used for medium access, which can be naturally repurposed for CR alignment.

\section{Further Analysis and Discussion}\label{sec_fad}

\subsection{Optimal Rate-Relevance Tradeoff}\label{subsec_fad_A}
From the perspective of the network information theory \cite{aguerri2019distributed, 6651793}, we know that the primal (\ref{obj_MDIB}) achieves the optimal rate-relevance tradeoff, i.e., the lowest communication cost (sum-rate) at a given level of relevance between the true $Y_t$ and the recovered $\hat{Y}_t$ under a logarithmic loss.

However, this property may not always hold \cite[Remark 4]{aguerri2019distributed}, especially in the considered multi-modal multi-task scenario, where each raw unimodal data may rely on multiple tasks.
It is exactly because the optimum under strict conditions is not usually achieved that we introduce a new degree of freedom, i.e., the task-aware selection of modality to control communication links to further remove data redundancy among modalities to reduce communication and computation costs.
This elaborate control allows a more efficient feature transmission than DIB.
We will empirically verify this statement in the next section.

\subsection{Relations among Objectives}
We summarize the relations among optimization objectives of (\ref{obj_MDIB})-(\ref{obj_TADVIB_empirical}) in the following.
This guarantees our probabilistic relaxation converges to the primal deterministic optimum.
\begin{theorem}\label{thm_2}
  For objectives $\mathcal{L}_{\text{\rm TA}}$, $\mathcal{L}_{\text{\rm $p$TA}}$, $\mathcal{L}_{\text{\rm $p$VTA}}$, and $\hat{\mathcal{L}}_{\text{\rm $p$VTA}}$ with a unified $\beta \geq 0$ at any feasible solution of $\hat{\mathcal{L}}_{\text{\rm $p$VTA}}$, i.e., $\upsilon_{p;\theta}, f_{\psi}, g_{\phi,\varphi}$, the following holds 
  \begin{equation}\label{eq_thm2_1}
    \begin{aligned}
      \lim_{N\to \infty}\!\!\!\hat{\mathcal{L}}_{\text{\rm $p$VTA}}(\theta, \psi, \phi, \varphi) &\! \overset{\text{\rm a.e.}}{=} \!\mathcal{L}_{\text{\rm $p$VTA}}[\upsilon_{p;\theta}, f_{\psi}, g_{\phi,\varphi}] 
      \\ & \! \geq \mathcal{L}_{\text{\rm $p$TA}}[\upsilon_{p;\theta}, f_{\psi}].
    \end{aligned}
  \end{equation}
  Besides, their minimums satisfy 
  \begin{equation}\label{eq_thm2_2}
    \begin{aligned}
      \min_{\theta, \psi, \phi, \varphi}\lim_{N\to \infty}\hat{\mathcal{L}}_{\text{\rm $p$VTA}} & \overset{\text{\rm a.e.}}{=} \min_{\upsilon_{p;\theta}, f_{\psi}, \bm g_{\phi}, {\bm g}_{\text{\rm loc};\varphi}}\mathcal{L}_{\text{\rm $p$VTA}}
      \\ \geq \min_{\upsilon_{p}, f}\mathcal{L}_{\text{\rm $p$TA}} & = \min_{\upsilon, f, g}\mathcal{L}_{\text{\rm TA}}.
    \end{aligned}
  \end{equation}
\end{theorem}
\begin{proof}
  For details, please see Appendix \ref{apdx_B}.
\end{proof}
\begin{remark}
  By introducing some mild assumptions, the asymptotic behavior from the empirical $\hat{\mathcal{L}}_{\text{\rm $p$VTA}}$ to the expected ${\mathcal{L}}_{\text{\rm $p$VTA}}$ can be more precisely characterized: the convergence rate is $\mathcal{O}(\frac{1}{\sqrt{n}})$, which corresponds to the concentration behavior in probability. Due to being beyond the scope of this paper, a detailed discussion is omitted here. For details, please see \cite[Chapter 3]{vershynin2018high}.
\end{remark}
\begin{remark}
  This theorem ensures that our transformation and approximation minimally hurt the achievement of the minimizer of the primal (\ref{obj_MDIB}), while enhancing its tractability. The crucial last equality in (\ref{eq_thm2_2}) strongly supports this point.
  Note that the CR $U$ is not mentioned in the theorem.
  It means that no matter how we define CR $U$, it does not affect the validity of the probability relaxation.
\end{remark}

\subsection{Selection for A Flexible Rate-Relevance Tradeoff}
We study properties of optimal selection $\arg\min_{a}\mathcal{R}(a)$ in this part.
We indicate that we can only claim the existence of optimal selection at the capacity limits of all receivers and transmitters, i.e., $|{\mathcal K}_t| = E_t$ for all receivers $t \in \mathcal{T}$ and $|\mathcal{M}_k| = E_k$ for all transmitters $k \in \mathcal{K}$ in a degenerate case that $\beta \to 0$.
\emph{For general cases, the optimal selection may not exist.}
Formally, 
\begin{proposition}\label{prop_2}
  Without the rate limit, i.e., $\beta \to 0$, there always exists a selection $\upsilon^\star$ that is optimal corresponding to (\ref{obj_MDIB}) with $|{\mathcal K}_t^\star|=E_t$ for any $t\in\mathcal{T}$ and $|\mathcal{M}_k^\star| = E_k$ for any $k \in \mathcal{K}$.
\end{proposition}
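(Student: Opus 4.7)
The plan is to take the $\beta\to 0$ limit of the objective in (\ref{obj_MDIB}) and exploit the monotonicity of conditional entropy in the selection to lift any optimum to one that saturates both cardinality constraints. As $\beta\to 0$, the $\beta$-weighted single-modal relevance $H(Y_t\mid Z_{k,t}^{m})$ and rate $I(X_k^{m};Z_{k,t}^{m})$ terms vanish, so the objective collapses to $\sum_{t\in\mathcal{T}} H(Y_t\mid Z_{\mathcal{K}_t})$. With no rate penalty remaining, each encoder $f_k^{m}$ may be taken information-preserving for $Y_t$, so the selection alone determines the limiting objective.

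Next I would establish a monotonicity lemma for this limit. For any two feasible selections $\bm \upsilon^\circ,\bm \upsilon^\star$ with $\mathcal{K}_t^\circ \subseteq \mathcal{K}_t^\star$ and $\upsilon_k^\circ(t) \subseteq \upsilon_k^\star(t)$ for every $(k,t)$, using the same encoders on the shared $(k,m)$-pairs, the standard ``conditioning reduces entropy'' inequality gives $H(Y_t\mid Z_{\mathcal{K}_t^\star})\leq H(Y_t\mid Z_{\mathcal{K}_t^\circ})$ for each $t$; summing over $t$ yields monotonicity of the $\beta\to 0$ objective in this partial order on selections.

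Existence then follows by greedy saturation. Starting from any optimal $(\bm \upsilon^\circ,\bm f^\circ)$, I would iteratively add an arc $(k,m)\to t$ to the selection whenever there is a receiver $t$ with $|\mathcal{K}_t^\circ|<E_t$ and a transmitter $k$ with $|\mathcal{T}_k^\circ|<E_k$ admitting an unused modality $m\in[m(k)]\setminus \upsilon_k^\circ(t)$; the newly introduced $Z_{k,t}^{m}$ is produced by the information-preserving encoder noted above. By the monotonicity lemma the objective never increases, so optimality is preserved; and each step strictly decreases the total slack $\sum_t(E_t-|\mathcal{K}_t^\circ|)+\sum_k(E_k-|\mathcal{T}_k^\circ|)$, so the procedure terminates in finitely many iterations at a selection $\bm \upsilon^\star$ with $|\mathcal{K}_t^\star|=E_t$ and $|\mathcal{T}_k^\star|=E_k$ for all $t,k$.

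The main obstacle I expect is the combinatorial feasibility of the greedy saturation: one must verify the procedure never halts with some receivers still slack but no compatible transmitter-modality pair remaining, or vice versa. This would be handled by a bipartite / Hall-type feasibility argument at the level of modality slots, and relies on the resource budgets $\{E_t\}$ and $\{E_k\}$ being compatible with the modality counts $\{m(k)\}$ of the problem --- an implicit compatibility assumption, and the reason the proposition only asserts the existence of a saturated optimum rather than uniqueness or universality.
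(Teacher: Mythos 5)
Your proposal is correct and follows essentially the same route as the paper: starting from any optimal selection, greedily extend it one link at a time until the cardinality constraints are saturated, arguing that at $\beta\to 0$ the objective cannot increase. The only difference is mechanical --- the paper realizes each added link with a constant encoder $f_{k^\prime}^{m^\prime}(\cdot,t^\prime)\equiv C$ so that $H(Y_{t^\prime}|Z_{\mathcal{K}_{t^\prime}})$ is exactly unchanged rather than invoking conditioning-reduces-entropy --- and the simultaneous-saturation feasibility issue you flag is likewise left implicit in the paper, whose proof treats only the receiver side and asserts the transmitter side is analogous.
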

\begin{proof}
  For details, please see Appendix \ref{apdx_C}.
\end{proof}
\begin{corollary}\label{coro_1}
  Without the rate limit, i.e., $\beta \to 0$, there always exists a selection policy $\upsilon_{p}^\star$ that is optimal corresponding to (\ref{obj_PTADIB}) with $\|{\bm a}_{{t}}^\star\|=E_t$ for any $t\in\mathcal{T}$ and $\|\sum_t{\bm a}_{k,t}^{( k ),\star}\| = E_k$ for any $k \in \mathcal{K}$.
\end{corollary}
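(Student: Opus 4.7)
The plan is to reduce this corollary to Proposition \ref{prop_2} by lifting the optimal deterministic selection into a degenerate probabilistic policy that concentrates all its mass on it. First, I would invoke Proposition \ref{prop_2} to obtain a deterministic selection $\bm \upsilon^\star$ that is optimal for (\ref{obj_MDIB}) and satisfies $|\mathcal{K}_t^\star| = E_t$ for every $t \in \mathcal{T}$ and $|\mathcal{T}_k^\star| = E_k$ for every $k \in \mathcal{K}$ in the limit $\beta \to 0$. Let $\bm a^\star$ be the equivalent characteristic-vector expression of $\bm \upsilon^\star$ constructed in Section \ref{sec_rand_sel}, so that $\|\hat{\bm a}_t^\star\| = E_t$ and $\|\bm a_{\mathcal{T}}^{\star,(k)}\| = E_k$ hold by the definitions (\ref{eq_st}) and (\ref{eq_st_reverse}).

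Next, I would embed $\bm a^\star$ into the probabilistic framework by setting $P_{\bm A}^\star = \delta_{\bm a^\star}$, the Dirac mass concentrated on $\bm a^\star$, and taking $U$ to be a trivial common-randomness variable (for instance, a constant with $R=0$). The factorization required by Definition \ref{def_cooperative} is then trivially satisfied with every conditional being a Dirac on the corresponding component of $\bm a^\star$, and since $\bm a^\star$ satisfies (\ref{eq_st})--(\ref{eq_st_reverse}) we obtain $P_{\bm A}^\star \in \mathcal{P}_{\bm A}$. Denote the resulting randomized selector by $\bm \upsilon_{\text{\rm p}}^\star$.

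The optimality of $\bm \upsilon_{\text{\rm p}}^\star$ then follows from the reduction
\[
\mathcal{L}_{\text{\rm PoM}^2\text{\rm{-DIB}}}[\bm \upsilon_{\text{\rm p}}^\star, \bm f] = \mathbb{E}_{P_{\bm A}^\star}\bigl[\mathcal{L}_{\text{M}^2\text{-DIB}}[\bm \upsilon^\star, \bm f]\bigr] = \mathcal{L}_{\text{M}^2\text{-DIB}}[\bm \upsilon^\star, \bm f],
\]
where the first equality is the definition (\ref{obj_PoMDIB}) and the second uses that $\bm A = \bm a^\star$ almost surely. Minimizing both sides over $\bm f$ and invoking the equality $\min_{\bm \upsilon_{\text{\rm p}}, \bm f} \mathcal{L}_{\text{\rm PoM}^2\text{\rm{-DIB}}} = \min_{\bm \upsilon, \bm f} \mathcal{L}_{\text{M}^2\text{-DIB}}$ from Theorem \ref{thm_2} shows that $\bm \upsilon_{\text{\rm p}}^\star$ attains the minimum of (\ref{obj_PoMDIB}); the almost-sure identity $\bm A = \bm a^\star$ then transports the norm equalities into the probabilistic statement, giving $\|\hat{\bm a}_t^\star\|=E_t$ and $\|\bm a_{\mathcal{T}}^{\star,(k)}\| = E_k$ with probability one.

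The only point that deserves explicit care is verifying that a Dirac policy legitimately lies in $\mathcal{P}_{\bm A}$ under the cooperative factorization of Definition \ref{def_cooperative}; this amounts to viewing each deterministic selector as a degenerate conditional distribution and checking the product structure. I expect this to be the subtlest step, but it is a routine verification rather than a genuine obstacle — once it is in place, Proposition \ref{prop_2} together with Theorem \ref{thm_2} immediately yields the claim.
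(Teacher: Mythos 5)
Your proposal is correct and follows essentially the same route as the paper, whose one-line proof simply cites Proposition \ref{prop_2} together with the equality in (\ref{eq_thm2_2}); your elaboration of lifting $\bm a^\star$ to a Dirac policy and checking membership in $\mathcal{P}_{\bm A}$ is exactly the content of Lemma \ref{lem_1} in the paper's Appendix B, which underpins that equality.
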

\begin{proof}
  It directly follows from (\ref{eq_thm2_2}) and Proposition \ref{prop_2}.
\end{proof}
\begin{remark}\label{remark_7}
  In the case $\beta \to 0$, communication costs essentially captured by MI are ignored.
  That is, evaluating relevance-rate tradeoff from (\ref{obj_MDIB}) degenerates to evaluating task relevance between transmitters' modalities and receivers only.
  This results that the more modalities are provided for a receiver, the better the performance evaluated by (\ref{obj_MDIB}), because communication is free.
  \emph{In the general case $\beta > 0$, Proposition \ref{prop_2} and Corollary \ref{coro_1} do not hold.} It indicates that the optimal selection does not necessarily exist on the boundary; it may also be located within the interior since the non-degenerate tradeoff works.
  This also illustrates that TADIB is more flexible than DIB since the selection participates in the tradeoff.
  An interesting extreme case is that even if we relax constraints, i.e., communication links and computation capabilities, $E_k$ and $E_t$, to infinity, the optimal participation may not be the full participation either.
  We show the existence of this case in Fig. \ref{fig:5}-\ref{fig:6}.
\end{remark}
\begin{remark}\label{remark_9}
  If hard constraints (\ref{eq_st}) and (\ref{eq_st_reverse}) are explicitly given, we can find $\upsilon_{p}^\star$ through our carefully constructed $p$TADIB. 
  But sometimes in practice, we only have a tendency that, given a level of inference quality, we hope that all devices have as few modality-task links as possible with others for efficient transmission.
  Since the simpler network connection topology implies lower complexity of scheduling, synchronization, and communication at the physical system level.
  In this case, we use the general multsiplier rule \cite{clarke2013functional} to let the count of $a$, i.e., $|a|$, as a penalty to the optimization objective with a multiplier $\gamma$ to form a three-way tradeoff among rate, relevance, and the number of links. We show the example in Fig. \ref{fig:7}-\ref{fig:8}.
\end{remark}

\section{Performance Evaluation}\label{sec_exp}

\subsection{Experimental Setup}
\subsubsection{Dataset}
Two datasets are used to evaluate ($p$)TADIB: the simple \emph{AV-MNIST} dataset \cite{vielzeuf2018centralnet}, and the recent public \emph{MM-Fi} dataset \cite{DBLP:conf/nips/YangHZCXYZLX23}.

{AV-MNIST Dataset:}
This dataset is established by aligning the MNIST handwritten digit images with the FSDD spoken digit audio samples. 
MNIST contains 70000 28$\times$28 gray-scale images of digits (0-9). FSDD contains 3000 audio recordings of spoken digits (0-9) from different speakers, sampled at 8000 Hz. 
We pair each MNIST image with an FSDD audio sample of the same digit label in (0-9). 
Due to the data size mismatch between FSDD and MNIST, we introduce data augmentation (e.g., Gaussian noise, random gain, time/frequency masking) to the audio to ensure a one-to-one correspondence. 
Each entry in this dataset consists of a normalized MNIST image and a Mel-spectrogram representation of an augmented FSDD audio sample with the identical label.

{MM-Fi Dataset:}
It is a public multi-modal non-intrusive 4D human dataset with 27 daily or rehabilitation action categories, scales up to 152986 entries, and synchronizes 5 non-intrusive sensing modalities including RGB-D frames, point cloud from mmWave Radar and LiDAR, and WiFi channel state information (CSI) data.

\subsubsection{Modality \& Task}
For AV-MNIST, we develop a semantic communication system involving 3 transmitters and 3 receivers.
Every transmitter has 3 modalities while each receiver has its specific task. 
We customize 3 types of modalities for experiments on AV-MNIST. 
Type-$A$ is the magnitude of the 2D Fast Fourier Transform (FFT) of the cropped image, where the primal gray-scale image is cropped to a 14$\times$14 patch of its center. 
Type-$B$ is the Mel-spectrogram representation of the audio with 16 Mel bands, using an FFT window of 2048 and a hop length of 256.
Type-$C$ is the vectorized random noise.
The $1$-st transmitter's observed data is $\bm X_1:=(X^C, X^C, X^A)$, the $2$-nd one's data is $\bm X_2:=(X^C, X^A, X^B)$, and the $3$-rd $\bm X_3:=(X^A, X^B, X^C)$.
We write these vectors into a matrix form, i.e., 
\begin{equation}\label{matrix_1}
  \begin{bmatrix}
    \bm X_3 \\
    \bm X_2 \\
    \bm X_1
  \end{bmatrix} = 
  \begin{bmatrix}
    X^A, X^B, X^C \\
    X^C, X^A, X^B \\
    X^C, X^C, X^A
  \end{bmatrix}
\end{equation}
where we artificially add redundancy and repetitive modalities to control and track the behavior of the optimal selection.
Also, we customize 3 types of tasks for different receivers.
The first receiver needs to distinguish the parity of handwritten digits, which is a classification task with 2-classes.
The second needs to identify whether digits have a ring structure (0, 4, 6, 8, 9 or not) with 6-classes.
The third needs to accomplish the typical 0-9 recognition with 10-classes.

For MM-Fi, we also develop a semantic communication system involving 4 transmitters and 3 receivers.
Every transmitter has 4 modalities while each receiver has its specific task. 
We customize 8 types of modalities ($A$-$H$), where $A$-$G$ correspond respectively to view-1 of infra-red, view-2 of infra-red, depth, RGB, LiDAR, mmWave, and WiFi-CSI, all of which are inherent entries of the dataset.
In addition, $H$ is the vectorized random noise.
We have a similar matrix as (\ref{matrix_1}) to represent local observations, 
\begin{equation}\label{matrix_2}
  \begin{bmatrix}
    \bm X_4 \\
    \bm X_3 \\
    \bm X_2 \\
    \bm X_1 
  \end{bmatrix} = 
  \begin{bmatrix}
    X^B, X^C, X^F, X^H \\ 
    X^G, X^E, X^F, X^H \\
    X^D, X^E, X^F, X^H \\
    X^A, X^B, X^C, X^H \\
  \end{bmatrix}
\end{equation}
We naturally have 3 types of tasks from MM-Fi's data annotation.
The first receiver needs to predict 3D human pose key points with a form in $\mathbb{R}^{17\times 3}$, which is regarded as a regression task.
The second needs to categorize 27 daily or rehabilitation actions corresponding to these data entries (27 classes).
Data are collected from 4 different scenes.
The third one needs to identify from which scenario the data comes (4 classes).

\subsubsection{Model}
We realize a simple feed-forward network (FFN) as a unified base model for encoders, decoders, and selectors on both datasets.
This model consists of two residual blocks. 
The first block projects the input to 512 dimensions, and the second to 256 dimensions, each followed by a ReLU activation and layer normalization. 
The final layer is linear with the target output dimension.
The regularization mechanism for selection is applied as in (\ref{eq_grad_estimate_2}) by default.

\subsubsection{Baseline} 
We compare the proposed TADIB against the following representative baselines:
\begin{itemize}
  \item Typical DL-based semantic communication (DLSC) \cite{9830752, 9847027, 10431795}: DLSC is designed to minimize learning risk without rate concerns. 
  \item VDDIB \cite{shao2022task}: A typical DIB-based solutions for multi-modal multi-task cases with full participation.
  \item RS-DIB: A typical DIB-based solution with random participation in the feasible region.
  \item MI-DIB (TADIB-prev): Our preliminary work \cite{pengchengICC}, which uses the sub-optimal MI-based relevance metric with top-k selection.
\end{itemize}

DLSC applies deterministic encoding and full participation. It is used to compare and verify the effectiveness of IB-based methods in rate reduction.
The rest are leveraged to verify the effectiveness of the newly introduced key variable, namely selection, and they both apply the typical DIB objective (\ref{obj_DIB}) to multiple tasks.
Among all baselines, only RS-DIB can fit the link constraints.
All baselines are realized by the base model.

\subsubsection{Metric} 
\emph{Relevance}, characterized by \emph{negative cross entropy (N-CE)}, is set to a common metric for all tasks, which is the logarithmic loss \cite{zaidi2020information}.
\emph{Top-1 Accuracy} is set to the task-specific metric for classification, while \emph{mean per joint position error (MPJPE)} and \emph{procrustes analysis-MPJPE (PA-MPJPE)} are also set to the task-specific metrics for the MM-Fi's human pose estimation.
CE terms are easy to estimate in classification but difficult in regression.
In this regard, we adopt mean square error (MSE) instead of CE for regression, since the two, MSE and CE, are equivalent under the Gaussian assumption \cite{DBLP:conf/iclr/YuYLJP24}.
Finally, sum-rate, i.e., the total communication rate, is set to measure the communication overhead besides (\ref{eq_st}) and (\ref{eq_st_reverse}).
\begin{table*}[t]
  \centering
  \begin{threeparttable}
    \setlength{\tabcolsep}{0.1in}
    \caption{Performance comparison on (a) AV-MNIST and (b) MM-Fi datasets (95\% confidence intervals). 
    All metrics include sum-rate (nat), N-CE (nat), Top-1 Acc (\%), and MPJPE/PA-MPJPE (mm). Besides, t1, t2, and t3 denote tasks 1, 2, and 3, respectively.}
    \label{table_1}
    \begin{tabular}{c|l|l|l|l|l|l|l}
      \toprule
      \multirow{1}{*}{}&\multirow{1}{*}{metric/method}&
      \multicolumn{1}{l}{{$p$TADIB}}&\multicolumn{1}{l}{{$\sim$} (no CR)}&\multicolumn{1}{l}{RS-DIB}&\multicolumn{1}{l}{MI-DIB}&\multicolumn{1}{l}{VDDIB}&\multicolumn{1}{l}{DLSC}\cr
    
      \midrule\multirow{6}{*}{(a)}
      &under limits           &  \checkmark & \checkmark & \checkmark & \ding{55}  & \ding{55}  & \ding{55}                             \cr
      \cmidrule(lr){2-8}
      &sum-rate               &  166.78$\pm$0.96 & 185.42$\pm$1.54 & 141.76$\pm$1.08 & 199.97$\pm$1.36 & 190.11$\pm$1.18 & 788.43$\pm$1.89        \cr
      &N-CE                   &  -0.39$\pm$0.05  & -0.72$\pm$0.05  & -0.70$\pm$0.03  & -0.34$\pm$0.01  & -0.30$\pm$0.02  & -0.10$\pm$0.01         \cr
      \cmidrule(lr){2-8}    
      &t1:Top-1Acc            &  98.60$\pm$1.89  & 97.50$\pm$1.61  & 87.20$\pm$2.49  & 97.22$\pm$0.48  & 98.29$\pm$1.29  & 98.79$\pm$1.05         \cr
      &t2:Top-1Acc            &  99.39$\pm$2.17  & 93.50$\pm$3.64  & 86.20$\pm$3.95  & 98.27$\pm$0.41  & 98.39$\pm$1.06  & 98.39$\pm$1.19         \cr
      &t3:Top-1Acc            &  96.50$\pm$1.59  & 87.75$\pm$4.52  & 92.60$\pm$1.70  & 97.17$\pm$0.59  & 98.60$\pm$1.46  & 98.60$\pm$1.20         \cr

      \midrule\multirow{7}{*}{(b)}
      &under limits           &  \checkmark & \checkmark & \checkmark & \ding{55}  & \ding{55}  & \ding{55}                             \cr
      \cmidrule(lr){2-8}
      &sum-rate               &  311.43$\pm$2.34 & 421.49$\pm$9.70 & 67.78$\pm$2.12  & 329.46$\pm$2.24   & 447.32$\pm$1.91  & 1479.38$\pm$153.43    \cr
      &N-CE                   &  -0.17$\pm$0.02  & -0.40$\pm$0.04  & -2.58$\pm$0.01  & -0.21$\pm$0.02    & -0.19$\pm$0.02   & -0.13$\pm$0.01        \cr
      \cmidrule(lr){2-8}    
      &t1:MPJPE               &  107.53$\pm$3.12 & 104.91$\pm$6.41 & 137.72$\pm$4.80 & 109.12$\pm$4.32  & 120.66$\pm$5.27  & 99.21$\pm$5.86        \cr
      &t1:PA-$\sim$           &  58.35$\pm$1.16  & 64.96$\pm$4.74  & 94.90$\pm$0.89  & 59.03$\pm$2.26   & 63.60$\pm$1.04   & 56.29$\pm$2.47        \cr
      &t2:Top-1Acc            &  100.00$\pm$0.00 & 92.75$\pm$11.24 & 16.01$\pm$15.93 & 100.00$\pm$0.00  & 100.00$\pm$0.00  & 100.00$\pm$0.00       \cr
      &t3:Top-1Acc            &  100.00$\pm$0.00 & 95.25$\pm$9.94  & 100.00$\pm$0.00 & 100.00$\pm$0.00  & 100.00$\pm$0.00  & 100.00$\pm$0.00       \cr
      \bottomrule
    \end{tabular}
  \end{threeparttable}
\end{table*}

\subsubsection{Default Training Hyper-Parameter}
Training rounds are set to 2000 epochs, with a batch size of 20.
The learning rate of coding is set to 1e-4.
The learning rate of selection is set to 5e-5.
Therefore, the learning rate ratio $r$ between selection and coding is set to 1/2 by default.
The multiplier $\beta$ is set to 1e-3.
$E_t$ in (\ref{eq_st}) is set to 2 for all $t\in \mathcal{T}$.
$E_k$ in (\ref{eq_st_reverse}) is set to 4 for all $k\in \mathcal{K}$.
The above two are valid since for both datasets, $E_t$ is up to 3, for AV-MNIST, $E_k$ is up to 9, and for MM-Fi, it is up to 12.
In doing so, the corresponding limit on the number of modality-task links is 8 for both datasets, i.e., $|a| \leq 8$.
Moreover, the output dimension of encoders is set to 24 for AV-MNIST and 48 for MM-Fi.
CR is initialized in prior as a 24-dimensional normal distribution for AV-MNIST and a 48-dimensional one for MM-Fi.

\subsubsection{Platform}
We conduct our experiments on a workstation with an NVIDIA GeForce RTX 5090 GPU and an Intel(R) Core(TM) i9-14900KF CPU. The experimental code is implemented using Python 3.9.23, CUDA 13.0, and PyTorch 2.2.2.

\begin{figure}[t]
		\includegraphics[width=\linewidth]{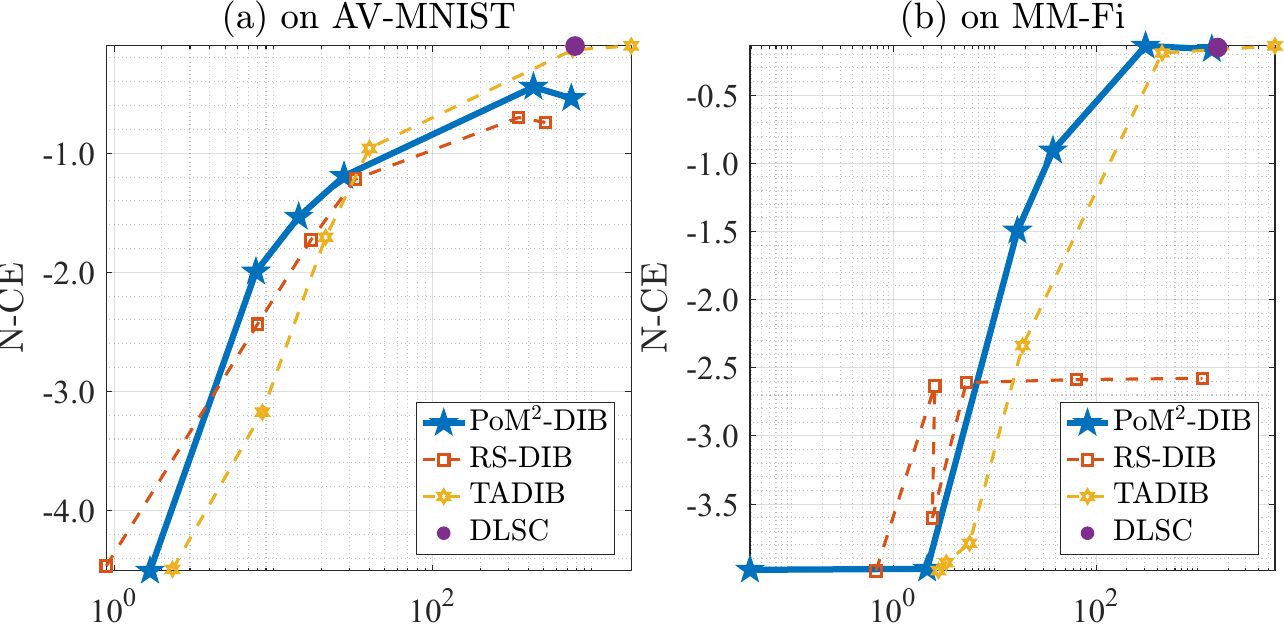}
		\caption{Rate-relevance tradeoff (sum-rate vs N-CE).}
    \label{fig1_1}
\end{figure}
\begin{figure}[t]
		\includegraphics[width=\linewidth]{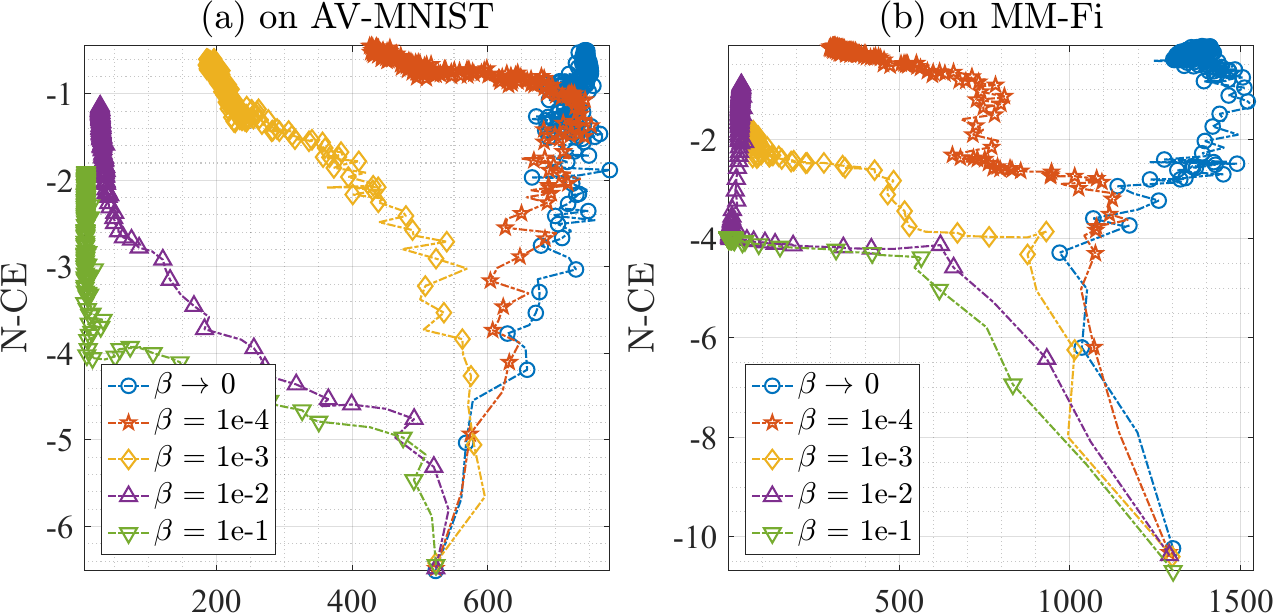}
		\caption{Training dynamics on information plane (sum-rate vs N-CE).}
    \label{fig1_2}
\end{figure}
\begin{table}[t]
  \centering
  \begin{threeparttable}
    \setlength{\tabcolsep}{0.06in}
    \caption{Physical training time, the number of active links, and computational cost on (a) AV-MNIST and (b) MM-Fi datasets. Training time is measured in seconds per epoch (s/epoch). The number of active modality-task links is averaged over 10 independent runs. The average computational cost is measured by FLOPs.}
    \label{table_3}
    \begin{tabular}{c|c|c|c|c|c|c|c}
      \toprule
      \multirow{2}{*}{}&\multirow{2}{*}{method}&
      \multicolumn{3}{c|}{(a) AV-MNIST}&
      \multicolumn{3}{c}{(b) MM-Fi}\cr
      \cmidrule(lr){3-5}\cmidrule(lr){6-8}
      & & time & links & cost & time & links & cost \cr 
      \midrule
      \multirow{6}{*}{}
      & $p$TADIB      & 1.64 & 8.00 & 18107392 & 5.12 & 5.00 & 62012856 \cr
      & $\sim$ (no CR)& 2.20 & 7.10 & 16618086 & 5.10 & 4.86 & 47395983 \cr
      & MI-DIB        & 11.69& 18.00& 34655232 & 22.15& 24.00& 124070400 \cr
      & VDDIB         & 17.52& 27.00& 49548288 & 24.87& 48.00& 239804928 \cr
      \bottomrule 
    \end{tabular}
  \end{threeparttable}
\end{table}

\subsection{Numerical Results}
\noindent\textbf{Performance superior to baselines:}
In Table \ref{table_1}, we compare the performance of $p$TADIB and baselines on the two datasets.
The first three, i.e., $p$TADIB, $p$TADIB without CR, and RS-DIB, are realistic.
They all meet the hard constraints of links (\ref{eq_st}) and (\ref{eq_st_reverse}).
Their average number of selected modality-task links fits the limits $|a|\leq 8$.
While the last three, i.e., MI-DIB, VDDIB, and DLSC, do not meet the requirements.
The number of links they use far exceeds the prescribed capacity limit, as shown in Table \ref{table_3}.
The inference quality of $p$TADIB, both in terms of the global metric N-CE and task-specific metrics (Top-1 Acc and MPJPE/PA-MPJPE),  is comparable to that of the full versions, i.e., VDDIB and DLSC.
Moreover, while meeting (\ref{eq_st}) and (\ref{eq_st_reverse}), its communication overhead is actually lower than that of both, demonstrating significant superiority.
Compared to RS-DIB, $p$TADIB shows the substantial advantage of the deployment of an optimizable selection policy. 
Simple random selection always lets its corresponding coding scheme converge to a sub-optimal point or even collapse, e.g., at task 2 of MM-Fi.
While in $p$TADIB, each task performs well. 
Compared to the version without CR (a constant $U$), $p$TADIB achieves a better tradeoff between rate and relevance by introducing a useful CR signal.
Compared to MI-DIB, $p$TADIB meets the hard limits while outperforming the method that uses a pre-defined metric, which proves the effectiveness of the optimal relevance metric. We also observe that the traditional DL-based semantic communication, represented by DLSC, consumes large communication resources due to the lack of rate control, but only achieves analogous results to IB-based solutions, $p$TADIB and VDDIB, in inference quality.
The sum-rate of DLSC is characterized by the sum of the entropy of low-dimensional outputs of encoders, since it applies deterministic encoders typically rather than the stochastic one of IB-based coding solutions, whose estimation is based on NPEET.
Also, we provide a comprehensive comparison of training time, the number of active links, and computational cost in Table \ref{table_3}.
We demonstrate that by optimally controlling links, $p$TADIB significantly reduces training time compared to other IB-based methods, saving up to 90\% of training time per epoch. 
The computational cost also reflects the advantage of TADIB in inference speed.

\begin{table}[t]
  \centering
  \begin{threeparttable}
    \setlength{\tabcolsep}{0.9475mm}
    \caption{Overall performance of $p$TADIB varying with learning rate ratio from 1/16 to 4/1 on (a) AV-MNIST and (b) MM-Fi datasets.}
    \label{table_2}
    \begin{tabular}{c|l|l|l|l|l|l|l|l}
      \toprule
      \multirow{1}{*}{}&\multirow{1}{*}{metric/ratio}&
      \multicolumn{1}{l}{1/16}&\multicolumn{1}{l}{1/8}&\multicolumn{1}{l}{1/4}&\multicolumn{1}{l}{1/2}&\multicolumn{1}{l}{1/1}&\multicolumn{1}{l}{2/1}&\multicolumn{1}{l}{4/1}\cr
    
      \midrule\multirow{2}{*}{(a)}
      &sum-rate               &  174.43 & 187.97 & 183.25 & 165.08 & 140.42 & 142.47 & 143.13        \cr
      &N-CE                   &  -0.72  & -0.67  & -0.61  & -0.45  & -0.92  & -0.96  & -0.95         \cr

      \midrule\multirow{2}{*}{(b)}
      &sum-rate               &  93.92  & 72.71  & 47.60  & 49.31 & 52.50 & 53.12  & 76.75        \cr
      &N-CE                   &  -0.40  & -0.26  & -1.81  & -1.88 & -1.98 & -1.87  & -0.31         \cr
      \bottomrule
    \end{tabular}
  \end{threeparttable}
\end{table}
\begin{figure}[t]
		\includegraphics[width=\linewidth]{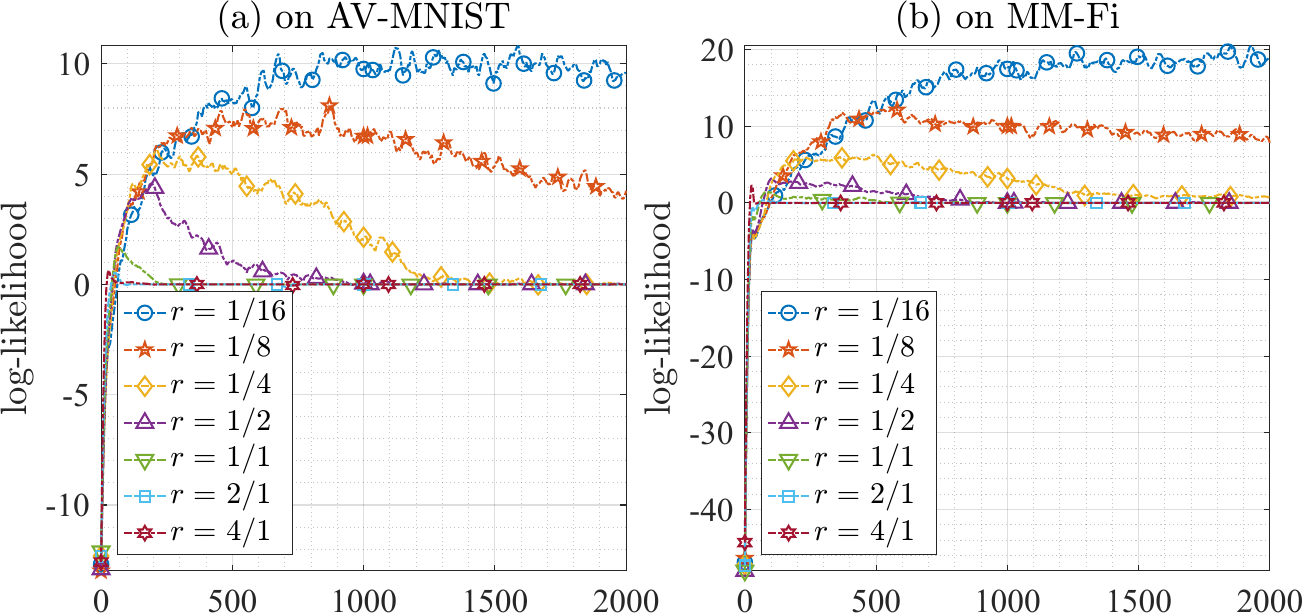}
		\caption{Convergence of log-likelihood defined in (\ref{eq_grad_estimate_1}) (log-likelihood vs iteration round).}
    \label{fig1_3}
\end{figure}

\noindent\textbf{Rate relevance tradeoff:}
In Fig. \ref{fig1_1}(a) and (b), we explore rate-relevance curves on the two datasets.
Note that the better rate-relevance tradeoff means that given the same communication cost, better inference quality is achieved, or given the same inference quality, less communicate rate is required.
DLSC does not involve a rate control variable like $\beta$. 
It is thereby represented as a fixed point in the two sub-figures.
Other methods apply a range of $\beta$ from 0 to 1 to control their communication rates.
Then, we collect the converged value of the ``sum-rate''-``N-CE'' pair at each sampled $\beta \in [0, 1]$ to roughly depict the tradeoff curve between the two.
It is observed that $p$TADIB achieves the ideal rate-relevance curve characterized by VDDIB under hard limits by introducing a new degree of freedom, i.e., selection.
In MM-Fi, its performance is even better than that of $p$TADIB (the further to the upper left, the better), which fully matches our discussion in Section \ref{subsec_fad_A}.
While RS-DIB, which lacks an update mechanism of selection, performs much worse.
It does not even work on the complex MM-Fi, corresponding to its crash on task 2 as shown in Table \ref{table_1}.
This contrast strongly supports the effectiveness of the cooperative selection policy presented in Section \ref{sec_coop_pol}.


\noindent\textbf{Training dynamics varying with the multiplier $\beta$:}
Fig. \ref{fig1_2}(a) and (b) sketch the training trajectories of $p$TADIB on the information plane varying with the value of $\beta$.
Each point on any trajectory corresponds to the value of ``sum-rate''-``N-CE'' pair at an iteration step.
It means that each trajectory corresponds to one complete training dynamics.
It is observed that each trajectory finally converges to a stationary point. 
The tail of the trajectory holds denser markers than the head, which directly reflects the convergence dynamics.
Also, we find that these trajectories converge to different points, in which $\beta$ explicitly controls the update direction.
Specifically, $\beta$ decreases from 1e-1 to 0. The trajectories for these different $\beta$ values exhibit distinctly different behaviors. 
In (a), when $\beta\to$ 0, the value of rate increases monotonically. While $\beta=$ 1e-4 or 1e-3, the information flow exhibits the classical behavior, i.e., the rate value increases then decreases while the relevance monotonically increases, which is previously indicated in \cite{tishby2015deep,aguerri2019distributed}.
If we proceed to pose a greater rate penalty by increasing $\beta$, then the rate value of the trajectory will degenerate into a monotonically decreasing state.
In Fig. \ref{fig1_2}(b), we also observe a similar phenomenon but with more irregular oscillations. 
It shows that different values of $\beta$ correspond to different types of training dynamics.
The phenomena are basically consistent with the pioneering work \cite{aguerri2019distributed}.
This further illustrates that our extended IB solution, i.e., $p$TADIB, involving selection as a new variable to tackle hard limits, still fully inherits the good properties of IB, achieving effective rate-relevance tradeoff.
\begin{figure}[t]
		\includegraphics[width=\linewidth]{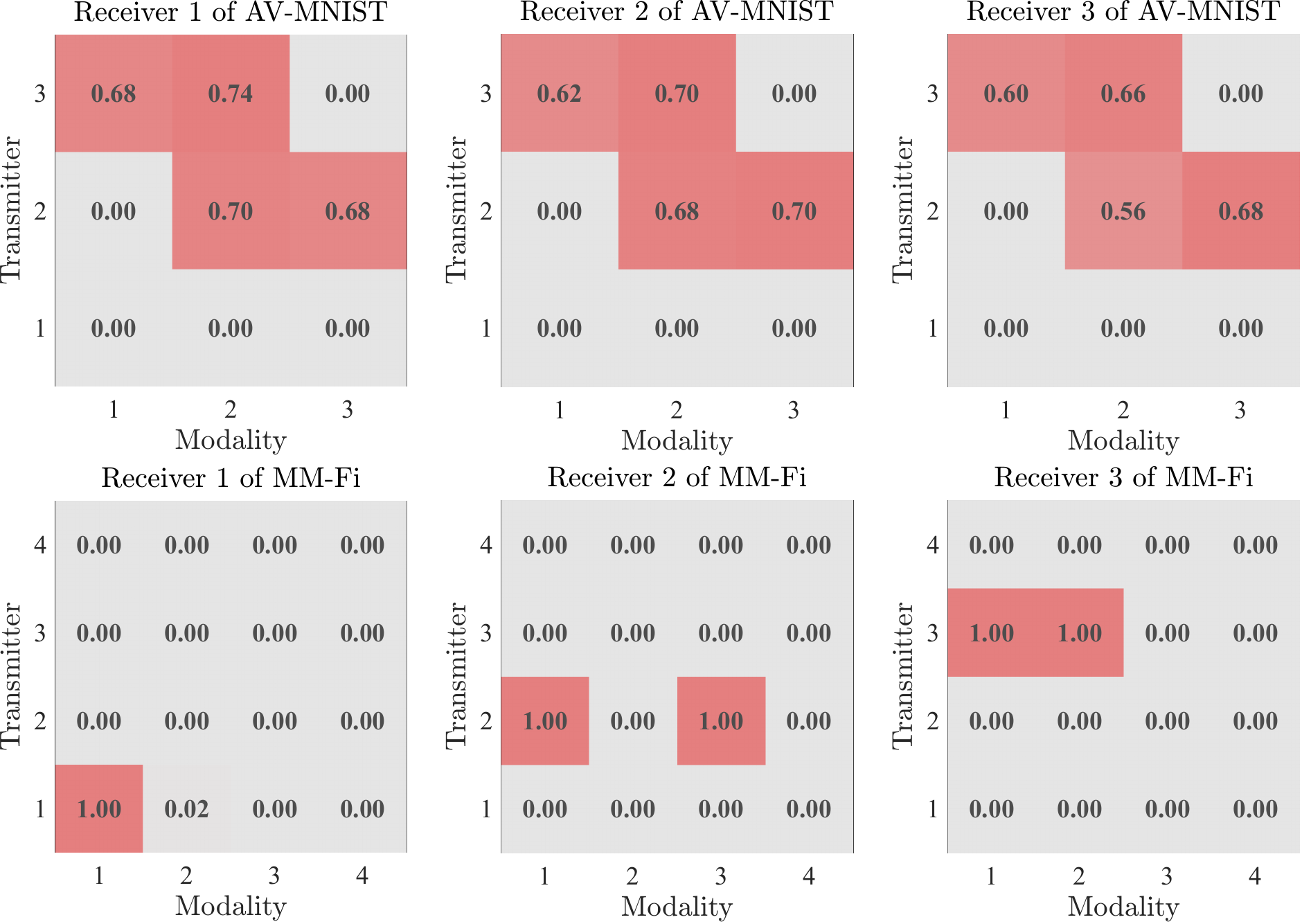}
		\caption{
			Converged ${\bm A}$ with CR. 
      The average number of links: 8 for AV-MNIST, 5 for MM-Fi. 
		}
    \label{fig_ex_1}
\end{figure}
\begin{figure}[t]
		\includegraphics[width=\linewidth]{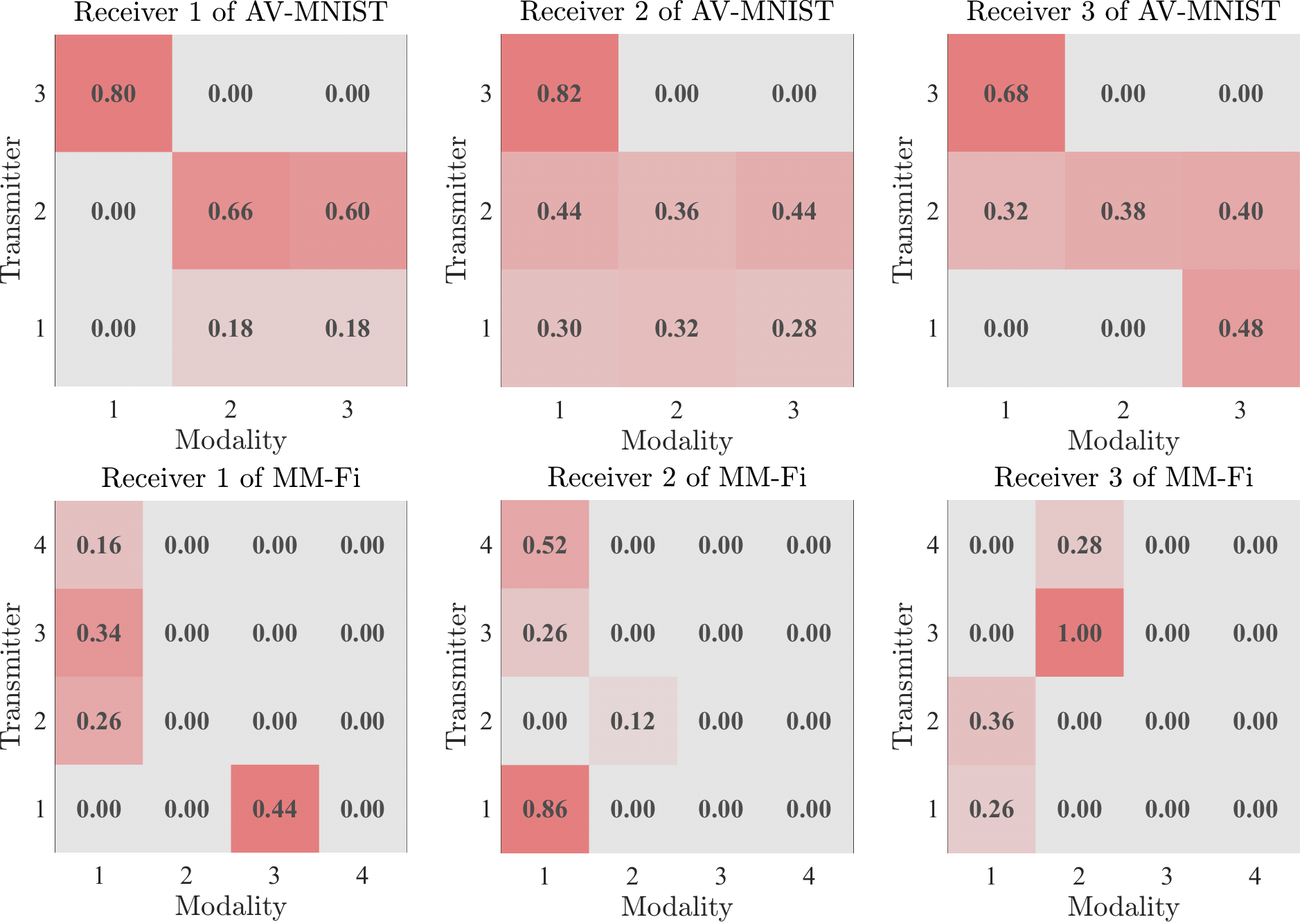}
		\caption{
			Converged ${\bm A}$ without CR. 
      The average number of links: 7.1 for AV-MNIST, 4.86  for MM-Fi. 
		}
    \label{fig_ex_2}
\end{figure}
\begin{figure}[t]
		\includegraphics[width=\linewidth]{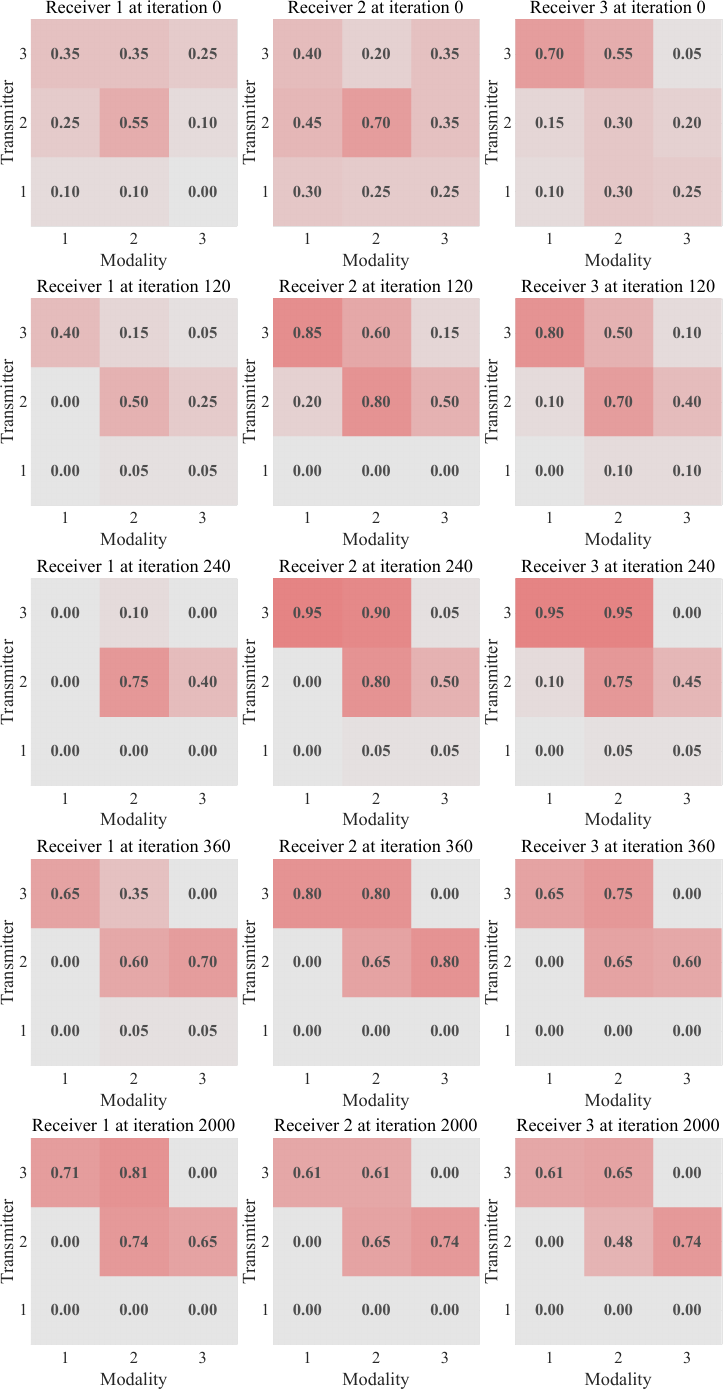}
		\caption{
			Convergence of randomized selection ${\bm A}$ on AV-MNIST. Corresponding rate: 165.08; N-CE: -0.45.
      The average number of links (converged): 8.
		}
    \label{fig2}
\end{figure}
\begin{figure}[t]
		\includegraphics[width=\linewidth]{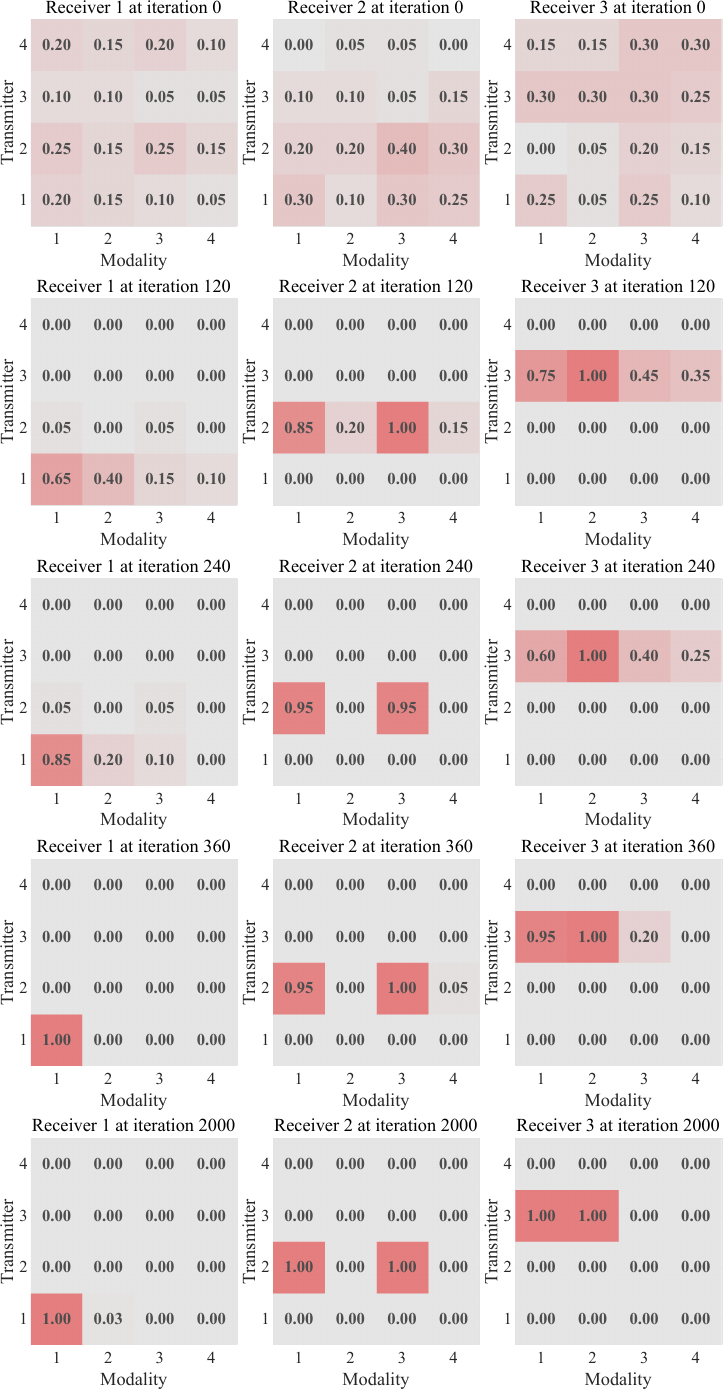}
		\caption{
			Convergence of randomized selection ${\bm A}$ on MM-Fi. Corresponding rate: 78.17; N-CE: -0.35.
      The average number of links (converged): 5.
		}
    \label{fig3}
\end{figure}
\begin{figure}[t]
  \includegraphics[width=\linewidth]{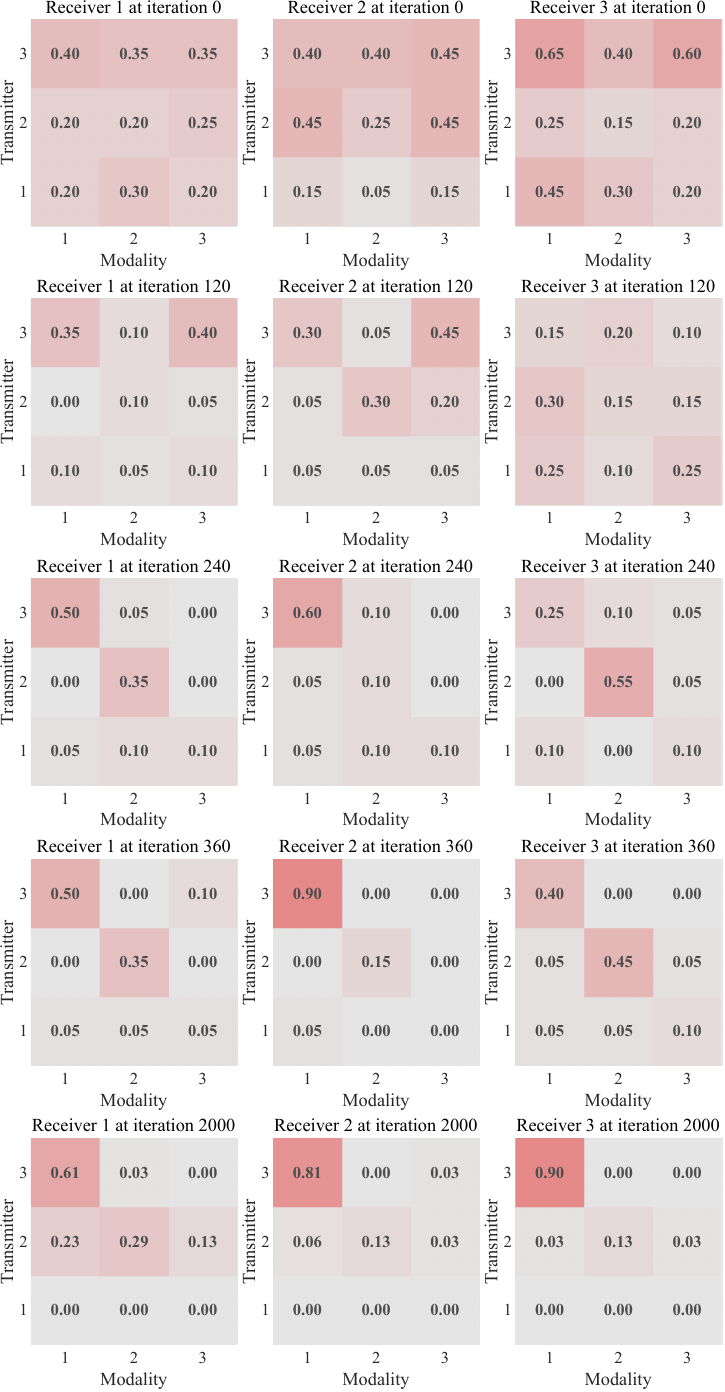}
  \caption{
    Convergence of ${\bm A}$ with the sparse prior on AV-MNIST. Corresponding rate: 135.71; N-CE: -0.70.
    The average number of links (converged): 3.44.
  }
  \label{fig:7}
\end{figure}
\begin{figure}[t]
  \includegraphics[width=\linewidth]{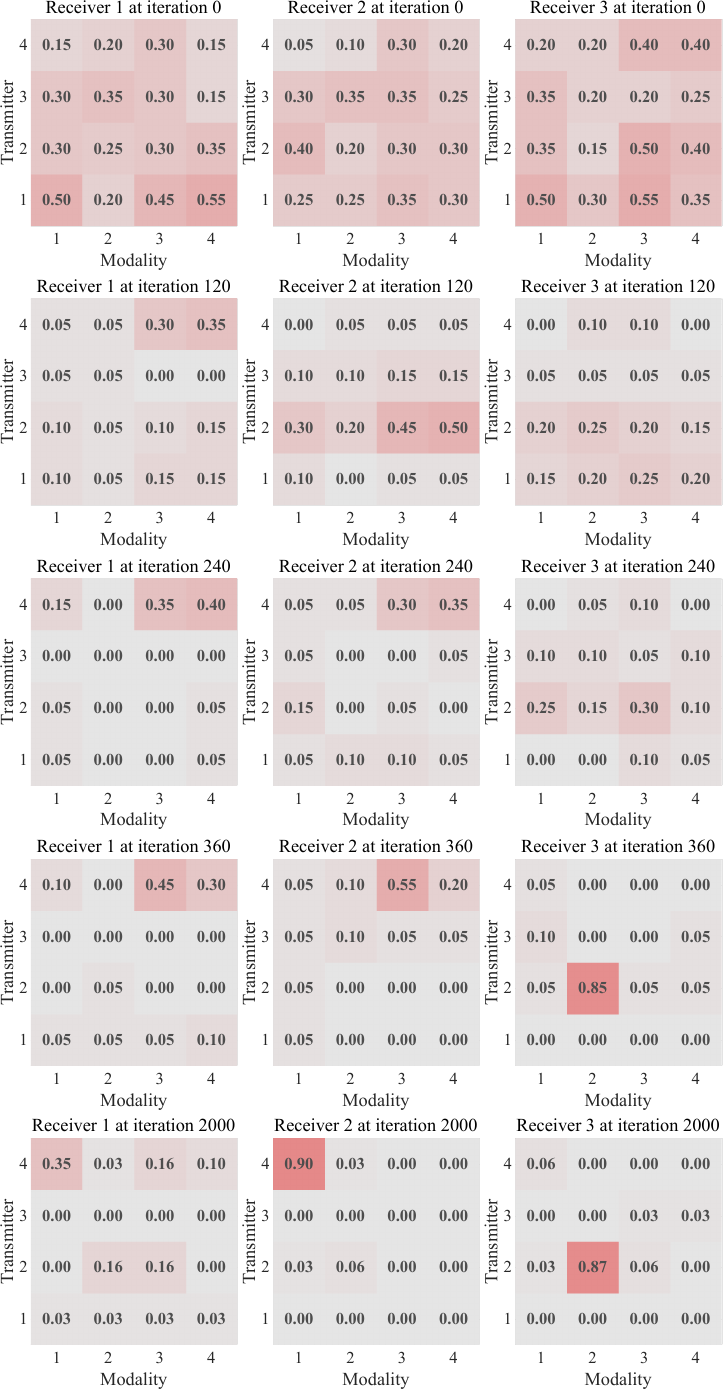}
  \caption{
    Convergence of ${\bm A}$ with the sparse prior on MM-Fi. Corresponding rate: 75.99; N-CE: -0.35.
    The average number of links (converged): 3.15.
  }
  \label{fig:8}
\end{figure}
\begin{figure}[t]
  \includegraphics[width=\linewidth]{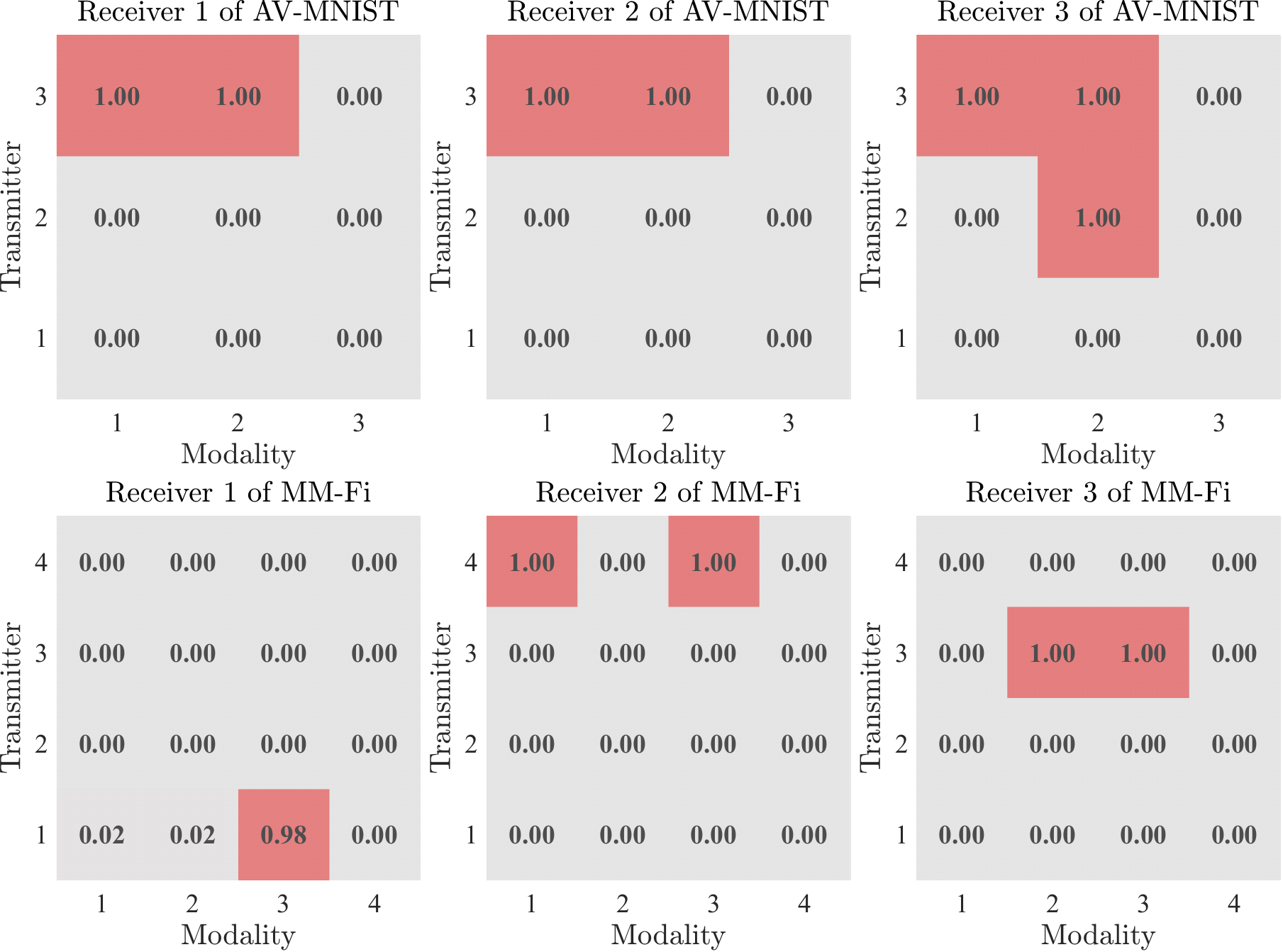}
  \caption{
    Converged ${\bm A}$ with no limits at $\beta$=1e-3. For AV-MNIST, rate: 172.27; N-CE: -0.15. For MM-Fi, rate: 78.17; N-CE: -0.36.
  }
  \label{fig:5}
\end{figure}
\begin{figure}[t]
  \includegraphics[width=\linewidth]{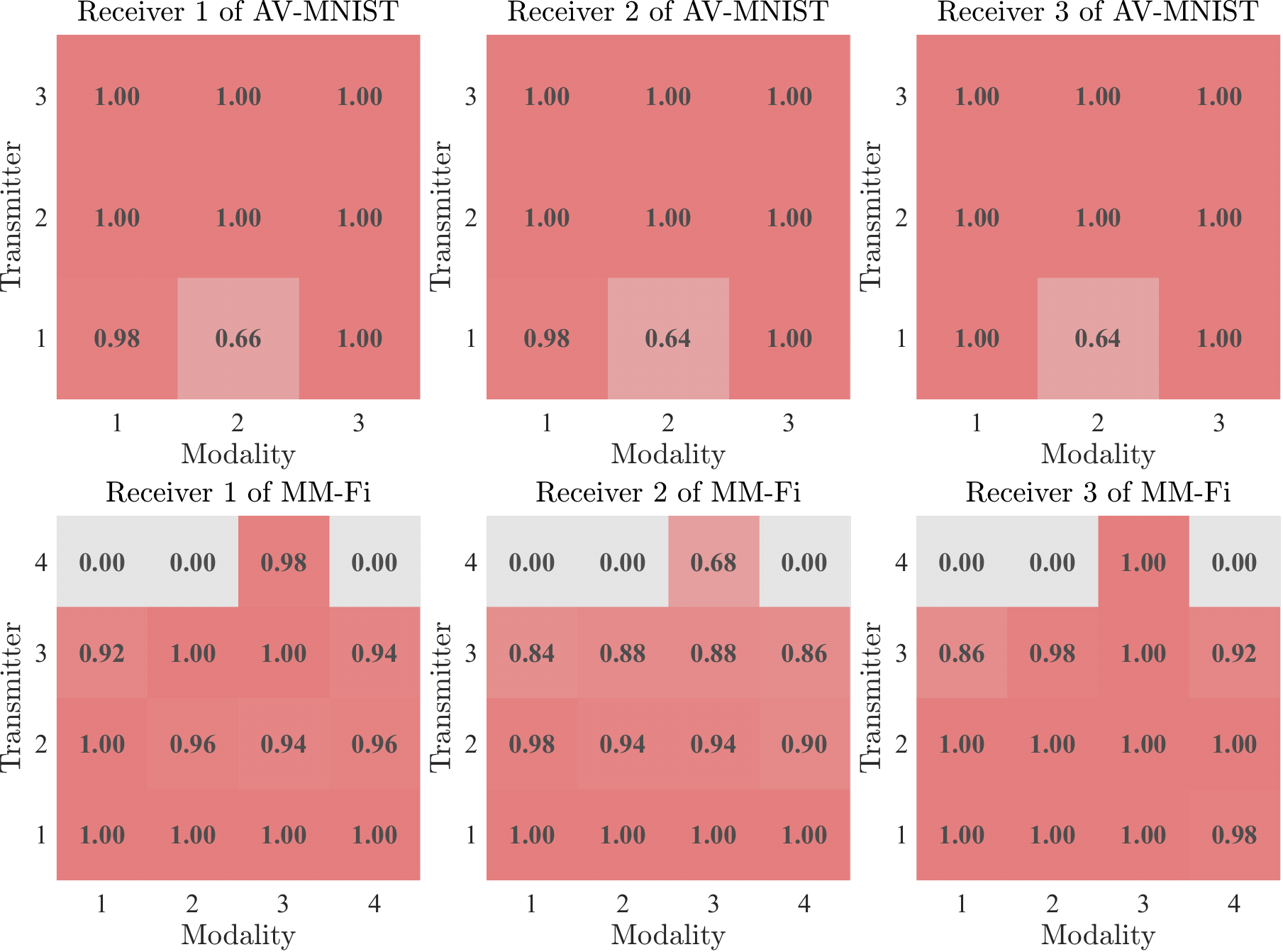}
  \caption{
    Converged ${\bm A}$ with no limits at $\beta\to$ 0. For AV-MNIST, rate: 1690.59; N-CE: -0.08. For MM-Fi, rate: 4592.59; N-CE: -0.14.
  }
  \label{fig:6}
\end{figure}

\noindent\textbf{Convergence of iterative policy update and parameter sensitivity analysis:}
In the rest of this section, we elaborate on the behaviors of the optimizable selection policy.
In Fig. \ref{fig1_3}(a)-(b), we detail the convergence of the log-likelihood (\ref{eq_grad_estimate_1}), controlled by the learning rate ratio $r$, which is the ratio between the learning rate of selection policy update and semantic coding.
It is observed that all curves start from a low negative value, then grow to a positive value, and finally drop to zero if the ratio is sufficiently large, or converge to a certain positive value.
The rapid growth in the early stage of training shows the effectiveness of policy gradient. 
It shows that the update of the cooperative selection policy matches the overall optimization goal.
The converged selection policies also exhibit different characteristics with the differences in ratios.
If the log-likelihood converges to zero, a deterministic policy $P_{\bm A|U}$ conditioned on CR is obtained. 
This is directly derived from (\ref{eq_grad_estimate_1}) to let $\log p({\bm a|u}) = 0$.
This also demonstrates that if this term converges to a positive value, we have a stochastic version with CR.
Both types of policies work well as shown in Table \ref{table_2}. Additionally, we illustrate that there exists an optimal ratio $r$, which is 1/2 on AV-MNIST and 1/8 on MM-Fi.
We also find that using random selection at a ratio of 4/1 can achieve nearly optimal inference quality on MM-Fi. 
This reveals the highly nonlinear influence of the learning rate ratio on optimality.
In which, the value of N-CE always converges to around -0.3 or -1.9 while the value of sum-rate always converges to around 80 or 50 on MM-Fi, which means that the selection and its corresponding coding jump between the optimal solution and the sub-optimal one due to different learning rate ratios.
The convergence of the log-likelihood only partially indicates selection convergence, as this log-likelihood also approaches zero when the loss value vanishes for every type of selection. 
Secondly, we present the convergence of the selection and the favorable properties it exhibits.

\noindent\textbf{The benifit of CR for policy convergence:}
In the preceding Table \ref{table_1}, we have compared the performance of $p$TADIB with and without CR.
Here, we further illustrate the benefit of CR on the convergence of the optimizable selection policy.
In the rest of this paper, the selection vector $\bm a$ is visualized as a stack of color blocks.
Specifically, as shown in Fig. \ref{fig_ex_1}, each row, presenting an iteration round, maintains 3 chunks. 
Each chunk represents all available modalities from transmitters for a specific task of a receiver.
The value of color coding for a modality corresponds to the marginal probability mass of selecting that modality for the specific task.
Hence, the value range is from 0 to 1.
With the definition of $\bm A$ in Section \ref{sec_rand_sel}, 
we can equivalently consider each row as corresponding to a ``mixed'' selection $\bar{\bm a}$ in the expectation sense, i.e., $\bar{\bm a} := \mathbb{E}[\bm A]$.
Combining Fig. \ref{fig_ex_1} and \ref{fig_ex_2}, we compare the convergence of the optimizable selection with and without CR.
It is observed that with CR, the selection converges to a stable distribution where the non-zero masses are high up to 1, indicating the low uncertainty of selection.
While without CR, the selection converges to a relatively flat distribution, in which the average of non-zero mass is less than 0.5, indicating high uncertainty.
Also, note that the number of links selected without CR is less than the limit 8, while with CR, it exactly meets the limit.
This means that without CR, the selection policy tends to be conservative to avoid violating the hard limits.
The both drawbacks of high uncertainty and conservative selection stem from the lack of coordination among devices, which also leads to sub-optimal performance of the version without CR as shown in Table \ref{table_1}.

\noindent\textbf{Visualization of policy convergence:}
The convergence of the optimizable selection is represented in Figs. \ref{fig2} and \ref{fig3}. The two are under hard limits with $E_t=2$ and $E_k=4$. 
Starting from a random initial distribution, $\bar{\bm a}$ gradually converges to a stationary distribution.
Although we point out in the preceding that $P_{\bm A|U}$ could converge to a deterministic policy at the default ratio 1/2, due to the inherent randomness of $U$ in training, this leads to the non-degenerate marginal $P_{\bm A}$. 
As shown in Fig. \ref{fig2}, the converged selection is mixed, namely, every possible selection has a mass lower than 1.
Also, we can easily measure the effectiveness of the converged selection on AV-MNIST.
Recall the setting in (\ref{matrix_1}), i.e., 
\begin{equation}
  \begin{bmatrix}
    \textcolor{red}{X^A}, \textcolor{red}{X^B}, X^C \\
    X^C, \textcolor{red}{X^A}, \textcolor{red}{X^B} \\
    X^C, X^C, \textcolor{red}{X^A}
  \end{bmatrix}
\end{equation}
where $X^A$ corresponds to image information, $X^B$ corresponds to audio information, and finally $X^C$ corresponds to independent noises.
We can easily confirm that the effective modalities are $X^A$ and $X^B$.
This exactly corresponds to the convergent selection expectation, i.e., the last row of Fig. \ref{fig2}, which drops all noise terms $X^C$.
This is in line with our expectations and once again strongly illustrates the effectiveness of the selection optimization.
In Fig. \ref{fig3}, a similar phenomenon is observed, but at this time, $\bm A$ is deterministic and degenerate: the probability mass of choosing these modalities equals 1.
Besides, it is worth noticing that in every iteration, the expected selection number (the sum of type masses) is 8 for AV-MNIST, which exactly falls at the boundary of our limits.
In contrast, the expected value of MM-Fi is less than 8 (converging to 5), which reveals the high data redundancy of the real dataset MM-Fi.

\noindent\textbf{Three way tradeoff among rate, relevance, and the number of modality-task links:}
Fig. \ref{fig:7} and \ref{fig:8}
allow full participation and adopt the
sparse prior discussed in Remark \ref{remark_9}.
They depict the convergence of the optimizable selection, albeit with soft link limits. 
It can be verified from the expected values of the first line, i.e., the initial stage, that the AV-MNIST expected value is 8.4 and the MM-Fi expected value is 14.35, both of which are greater than the 8 we set.
However, the modal selection that eventually converges to is highly sparse with the AV-MNIST expected value 3.44 and the MM-Fi expected value 3.15, which is due to the application of the sparse regularization with $\gamma  = 0.1$.
This indicates that by introducing the additional multiplier $\gamma$, the rate-relevance tradeoff under modality-aware constraints can be successfully extended to a three-way tradeoff of rate-relevance-selection, thereby allowing for intellicise semantic communication networking.

\noindent\textbf{Verification of optimal selection at no physical limits:}
We now verify our statement in Remark \ref{remark_7}, which indicates that at $\beta > 0$, even when there exist no physical constraints, the optimal selection does not degrade into full participation.
This claim is empirically demonstrated as illustrated in Fig. \ref{fig:5}-\ref{fig:6}.
In this case, a few modalities that are beneficial for task inference are selected, all the useless noise modalities are not included.
In contrast, if we set $\beta\to$ 0 as illustrated in Fig. \ref{fig:6}, free communication and unlimited computing power make the convergence selection approach full participation.

\section{Conclusion}\label{sec_conclusion}
This paper proposes the TADIB framework for multi-modal multi-task semantic communication, which defines a new modality-task score and introduces a new scheme to jointly optimize modality selection alongside semantic coding based on DIB theory.
By transforming discrete selection into a probabilistic form and leveraging common randomness for cross-device coordination, this distributed framework achieves a flexible modality-aware rate-relevance tradeoff.
It can reduce the overhead of unnecessary transmission and computation of redundant data significantly, while maintaining high inference quality under physical limits.
Besides theoretical guarantees, extensive experimental results also illustrate the effectiveness of TADIB across different tasks and datasets.

{\appendices
\section{Proof of Theorem \ref{thm_1} and Proposition \ref{prop_1}}\label{apdx_A}
\begin{proof}[Proof of Theorem \ref{thm_1}]
  By introducing an auxiliary $U$ matching Definition \ref{def_cooperative}, we directly decompose $\mathcal{L}_{\text{\rm $p$TA}}[\upsilon_{p}, f]$ as
  \begin{equation}
    \begin{aligned}
      &\mathbb{E}_{P_{\bm A}}\big[\mathcal{L}_{\text{TA}}[\upsilon_{p}, f]\big] = \mathbb{E}_{P_{U}}\Big[\mathbb{E}_{P_{\bm A|U}}\big[\mathcal{L}_{\text{TA}}[\upsilon_{p}, f]\big]\Big]
      \\ & = \mathbb{E}_{P_{U}}\Big[\sum_{t=1}^T \mathbb{E}_{P_{\bm A_{k,t}|U}} \big[H(Y_t|\bm A_{k,t} \circ \bm Z_t)  +  \beta  \langle \bm A_{k,t}, \bm{\mathcal{L}}_{\text{\rm IB},t} \rangle\big]\Big],
    \end{aligned}
  \end{equation}
  which is our desired result.
\end{proof}
\begin{proof}[Proof of Proposition \ref{prop_1}]
	For (i), we have
  \begin{equation}\label{eq_apdx_1}
    \begin{aligned}
      I(Z;Y)
      & \geq\int dy dz p(y,z)\log\frac{q(y|z)}{p(y)} \\
      &=\int dy dz p(y,z)\log q(y|z)-\int dy p(y)\log p(y) \\
      &=H(Y)-\mathbb{E}_{z \sim P_Z}[H(P_{Y|Z},Q_{Y|Z})],
    \end{aligned}
  \end{equation}
	where the first inequality holds due to the non-negativity of $D_{\text{\rm KL}}(\cdot\|\cdot)$, i.e.,
	\begin{equation}
			\int \! p({z})p(y|{z})\!\log \frac{p(y|z)}{{q}(y|z)}d{z}dy = \mathbb{E}_{P_{Z}}[D_{\text{\rm KL}}(P_{Y|Z}\|Q_{Y|Z})] \geq 0.
	\end{equation}
  Note that $H(Y|Z)=H(Y)-I(Y;Z)$.
  We obtain the result stated in (i) by replacing $Z$ with $Z_{k,t}^{ m }$ and $Y$ with $Y_t$.

  For (ii), following the preceding inequality (\ref{eq_apdx_1}), we replace $Z$ with $\bm A_{k,t} \circ \bm Z_t$ and $Y$ with $y_t$ to yield
  \begin{equation}\label{eq_apdx_2}
    \begin{aligned}
      H(Y_t|\bm A_{k,t} \circ \bm Z_t) \leq \mathbb{E}_{P_{\bm A_{k,t}\circ \bm Z_t}}[H(P_{Y_t|\bm A_{k,t} \circ \bm Z_t},Q_{Y_t|\bm A_{k,t} \circ \bm Z_t})].
    \end{aligned}
  \end{equation}
  We can finally obtain the result stated in (ii) by taking the law of the unconscious statistician to replace the expectation over $\bm A_{k,t} \circ \bm Z_t$ with the expectation over $\bm Z_t|\bm A_{k,t}$.
\end{proof}

\section{Proof of Theorem \ref{thm_2}}\label{apdx_B}
We first give a technical lemma, then we prove Theorem \ref{thm_2}.
\begin{lemma}\label{lem_1}
  If $P_{\bm A}$ is Dirac and admits the constraint, then $P_{\bm A} \in \mathcal{P}_{\bm A}$.
\end{lemma}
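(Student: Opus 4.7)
The plan is to show that any Dirac $P_{\bm A}$ concentrated on a feasible point $\bm a^\star \in \{0,1\}^{j_KT}$ satisfying (\ref{eq_st}) and (\ref{eq_st_reverse}) admits a cooperative factorization as required by Definition \ref{def_cooperative}, and therefore lies in $\mathcal{P}_{\bm A}$. The support condition is automatic, since $\text{\rm supp}(\bm A) = \{\bm a^\star\}$ is a feasible singleton by hypothesis, so the only real content is the construction of an auxiliary CR variable $U$ whose induced product matches the deterministic $\bm a^\star$.

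First, I would write $\bm a^\star = \oplus_{t=1}^T \bm a_t^\star$ and read off from $\bm a^\star$ the natural component values $\hat{\bm a}_{{t}}^\star := (a_{k,t}^\star)_{k=1}^K$ and, for each $k$, $\bm a_t^{\star(k)} := (a_{k,t}^{{m}, \star})_{{m}=1}^{m(k)}$; by construction $a_{k,t}^\star = 1$ iff some $a_{k,t}^{{m},\star} = 1$, which is exactly the compatibility relation (\ref{eq_interaction}) embedded in the definition of $\bm \upsilon_{\mathcal{K}}$.

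Next, I would pick $U$ trivial, for instance a degenerate random variable taking a single value $u_0 \in \{0,1\}^R$ with any $R \le j_KT$ (say $R = 1$), so that $P_U$ is itself Dirac. Define the conditionals by
\begin{equation}
P_{\hat{\bm A}_{{t}}|U=u_0} := \delta_{\hat{\bm a}_{{t}}^\star}, \qquad P_{\bm A_t^{(k)} | A_{k,t} = a_{k,t}^\star, U=u_0} := \delta_{\bm a_t^{\star(k)}},
\end{equation}
with $P_{\bm A_t^{(k)} | A_{k,t} = 0, U = u_0}$ set to the Dirac at the zero vector so that the conditional is defined on the whole range of $A_{k,t}$. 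Then the product
\begin{equation}
P_U \prod_{t=1}^T P_{\hat{\bm A}_{{t}}|U} \prod_{k=1}^K P_{\bm A_t^{(k)} | A_{k,t} U}
\end{equation}
is a Dirac measure whose only atom is $(u_0, \hat{\bm a}^\star, \bm a^\star)$; marginalising out $U$ and $\hat{\bm A}$ recovers exactly the given $P_{\bm A} = \delta_{\bm a^\star}$. This verifies the factorisation requirement in Definition \ref{def_cooperative}, and combined with the feasibility of $\bm a^\star$ completes the membership in $\mathcal{P}_{\bm A}$.

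I expect no substantive obstacle here; the lemma is essentially a ``Dirac embeds into the cooperative family'' statement. The only subtlety worth stating carefully is the bookkeeping around (\ref{eq_interaction}): the conditional $P_{\bm A_t^{(k)} | A_{k,t} U}$ must be consistent with the hard rule that $\bm a_t^{\star(k)} = \bm 0$ whenever $a_{k,t}^\star = 0$, which is why I split the definition of that kernel according to the value of $A_{k,t}$. Once this case split is in place, the factorisation and support conditions are immediate, and the lemma follows.
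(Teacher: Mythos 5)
Your proposal is correct and rests on the same underlying fact as the paper's proof: a Dirac measure trivially factorizes into the cooperative product form of Definition \ref{def_cooperative}, so membership in $\mathcal{P}_{\bm A}$ reduces to the (assumed) feasibility of the atom. The paper argues this by noting that components of a Dirac joint are automatically independent, so the factorization holds ``with an arbitrary $U$,'' whereas you explicitly construct a degenerate $U$ and the Dirac conditional kernels; this is the same argument written out more constructively, with your case split on $A_{k,t}=0$ being a harmless extra piece of bookkeeping.
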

\begin{proof}[Proof of Theorem \ref{thm_2}]
  In (\ref{eq_thm2_1}), the almost everywhere convergence holds due to the strong law of large numbers, and the inequality holds due to Proposition \ref{prop_1}.
  In (\ref{eq_thm2_2}), the inequality can be derived analogously to \cite[Lemma 1]{aguerri2019distributed}.

  In the rest of the proof, we show that the last equality of (\ref{eq_thm2_2}) always holds. 
  To show $\min_{\upsilon_{p}, f}\mathcal{L}_{\text{\rm $p$TA}} = \min_{\upsilon, f, g}\mathcal{L}_{\text{\rm TA}}$, we only need to show $\min_{P_{\bm A}} \mathbb{E}_{P_{\bm A}} [\mathcal{R}(\bm A)] = \min_{\bm a} \mathcal{R}(\bm a)$ due to the equivalence expressed in (\ref{eq_commu}).

  We first prove $\min_{P_{\bm A}} \mathbb{E}_{P_{\bm A}} [\mathcal{R}(\bm A)] \leq\min_{P_{\bm A}} \mathbb{E}_{P_{\bm A}} [\mathcal{R}(\bm A)]$ and then prove $\min_{P_{\bm A}} \mathbb{E}_{P_{\bm A}} [\mathcal{R}(\bm A)] \geq \min_{\bm a} \mathcal{R}(\bm a)$. 

  The minimizer of $\mathcal{R}(\bm a)$ can induce its corresponding optimal selection $\bm a^\star$. We can construct a Dirac distribution at $\bm a^\star$, denoted as $\widetilde{P}_{\bm A}^\star$. We claim $\widetilde{P}_{\bm A}^\star \in \mathcal{P}_{\bm A}$ due to Lemma \ref{lem_1}. This shows that $\mathcal{R}(\bm a^\star)$ is a special case of $\mathbb{E}_{P_{\bm A}} [\mathcal{R}(\bm A)]$ by taking $P_{\bm A} = \widetilde{P}_{\bm A}^\star$.
  Let $P_{\bm A}^\star$ be optimal over $\mathcal{P}_{\bm A}$ and $\widetilde{P}_{\bm A}^\star$ may not be identical to $P_{\bm A}^\star$, we claim $\mathbb{E}_{P_{\bm A}^\star} [\mathcal{R}(\bm A)] \leq \mathcal{R}(\bm a^\star)$.
  Also, with a certain $1\geq p \geq 0$,
  \begin{equation}
    \begin{aligned}
      \mathbb{E}_{P_{\bm A}^\star}[\mathcal{R}(\bm A)]  &= \min_{P_{\bm A}} \mathbb{E}_{P_{\bm A}} [\mathcal{R}(\bm A)]
      \\ & = p \mathcal{R}(\bm a^\star) + (1-p) \mathcal{R}(\bm a^\prime)
    \end{aligned}
  \end{equation}
  where for simplicity, we consider a case that $P_{\bm A}^\star$ with its support at $\bm a^\star$ and $\bm a^\prime$ only. 
  Then the following holds
  \begin{equation}
    \mathbb{E}_{P_{\bm A}^\star}[\mathcal{R}(\bm A)] - \mathcal{R}(\bm a^\star) = (1-p)(\mathcal{R}(\bm a^\prime) - \mathcal{R}(\bm a^\star)) \geq 0.
  \end{equation}
  This also holds for general supports that have finite counts.
  It follows that $\mathbb{E}_{P_{\bm A}^\star}[\mathcal{R}(\bm A)] \geq \mathcal{R}(\bm a^\star)$.
\end{proof}
\begin{proof}[Proof of Lemma \ref{lem_1}]
  Let random variables $V,W$ follow a joint distribution $P_{VW}$, which is Dirac at $(v_0,w_0)$, i.e., its probability mass function satisfies
  \begin{equation}
    p_{VW}(v,w) = 
    \left\{
      \begin{aligned}
        & 1 \text{ if } (v,w) = (v_0,w_0), \\
        & 0 \text{ else}.
      \end{aligned}
    \right.
  \end{equation}
  In this case, we can directly have $p_{VW}(v,w) = p_{V}(v)p_{W}(w)$, which implies the independence between $V$ and $W$.
  Extending the above conclusion to the Dirac $P_{\bm A}$, it follows that $P_{\bm A} = \textstyle\prod_{t=1}^T\prod_{k=1}^K P_{\bm A_{k,t}^{( k )}}$, which is in $\mathcal{P}_{\bm A}$ with an arbitrary $U$ if it is feasible.
\end{proof}

\section{Proof of Proposition \ref{prop_2}}\label{apdx_C}
\begin{proof}
  If we assume that one selection $\mathcal{M}(\mathcal{T})$ is feasible that $\mathcal{M}(\mathcal{T}) \in \mathscr{A}$ and is optimal with $|\mathcal{M}(\mathcal{T})|= C^\star$ where $C^\star \leq KMT$ is the link limit that we take equality on all constraints $|\mathcal{M}_k| = E_k$ and $|{\mathcal K}_t| = E_t$, then we trivially have the result.

  Otherwise, we set $\mathcal{M}(\mathcal{T})\in \mathscr{A}$ as the optimal selection with $|\mathcal{M}(\mathcal{T})|= C < C^\star$.
  In the rest, we show that we can always find a selection with $|\mathcal{M}(\mathcal{T})|= C + 1$ that is optimal.
  Recursively, we can have the extended version in $\mathscr{A}$ with $|\mathcal{M}(\mathcal{T})|= C^\star$ by adding the missing links to obtain the claimed result.

  We now consider an additional link $(k^\prime, m^\prime, t^\prime)$ that is not in $a_{\text{prev}} := \mathcal{M}(\mathcal{T})$, and set $a_{\text{new}} := a_{\text{prev}} \cup \{(k^\prime, m^\prime, t^\prime)\}$.
  Then, 
  \begin{equation}
    \begin{aligned}
      \mathcal{L}_{\text{\rm TA}}& [\upsilon(a_{\text{new}}), f] = \mathcal{L}_{\text{\rm TA}}[\upsilon(a_{\text{prev}}), f] 
      \\ & + H(Y_{t^\prime}|\bm Z_{\mathcal{M}(t^\prime)\cup\{(k^\prime,m^\prime,t^\prime)\}}) - H(Y_{t^\prime}|\bm Z_{\mathcal{M}(t^\prime)})
      \\ & + \beta (H(Y_{t^\prime}|Z_{k^\prime,t^\prime}^{ m^\prime}) + I(X_{k^\prime}^{ m^\prime};Z_{k^\prime,t^\prime}^{ m^\prime})).
    \end{aligned}
  \end{equation}
  Set $f_{k^\prime}^{m^\prime}(\cdot;t^\prime)$ as a constant (vector) encoder for $(k^\prime, m^\prime, t^\prime)$ to eliminate the information dependence on the new link, i.e.,
  \begin{equation}
    H(Y_{t^\prime}|\bm Z_{\mathcal{M}(t^\prime)\cup\{(k^\prime,m^\prime,t^\prime)\}}) - H(Y_{t^\prime}|\bm Z_{\mathcal{M}(t^\prime)}) = 0,
  \end{equation}
  and at $\beta = 0$,
  \begin{equation}
    \beta (H(Y_{t^\prime}|Z_{k^\prime,t^\prime}^{ m^\prime}) + I(X_{k^\prime}^{ m^\prime};Z_{k^\prime,t^\prime}^{ m^\prime})) = 0,
  \end{equation}
  That is,
  \begin{equation}
    \mathcal{L}_{\text{\rm TA}} [\upsilon(a_{\text{new}}), f^\prime] = \mathcal{L}_{\text{\rm TA}}[\upsilon(a_{\text{prev}}), f],
  \end{equation}
  where $f^\prime$ is an extended version of $f$ with $f_{k^\prime}^{m^\prime}(\cdot;t^\prime) \equiv c$.
  We now choose $f$ to be the minimizer of $\mathcal{L}_{\text{\rm TA}}[\upsilon(a_{\text{prev}}), f]$ as $f^\star$, then note that $a_{\text{prev}}$ is assumed to be optimal, we have
  \begin{equation}
    \begin{aligned}
      \min_{f, a} \mathcal{L}_{\text{\rm TA}}[\upsilon(a), f] & = \min_{f}\mathcal{L}_{\text{\rm TA}}[\upsilon(a_{\text{prev}}), f] 
      \\ & = \mathcal{L}_{\text{\rm TA}}[\upsilon(a_{\text{prev}}), f^\star]
      \\ & = \mathcal{L}_{\text{\rm TA}}[\upsilon(a_{\text{new}}), f^{\star,\prime}]
      \\ & \geq \min_{f} \mathcal{L}_{\text{\rm TA}}[\upsilon(a_{\text{new}}), f]
      \\ & \geq \min_{f, a} \mathcal{L}_{\text{\rm TA}}[\upsilon(a), f].
    \end{aligned}
  \end{equation}
  This shows that the extended selection $a_{\text{new}}$ is also optimal satisfying $\min_{f} \mathcal{L}_{\text{\rm TA}}[\upsilon(a_{\text{new}}), f] = \min_{f, a} \mathcal{L}_{\text{\rm TA}}[\upsilon(a), f]$.

  We repeat the above construction to achieve the extended selection $a_{\text{new}}$ at the capacity limit with $|\mathcal{M}(\mathcal{T})|= C^\star$ to derive the result.
\end{proof}}

{\footnotesize
\bibliographystyle{IEEEtran}
\bibliography{references}}

\end{document}